\documentclass[journal]{IEEEtran}

\usepackage[cmex10]{amsmath}
\usepackage{amssymb,amsfonts,bm,mathtools}
\usepackage{amsthm}
\DeclareMathAlphabet\mathbfcal{OMS}{cmsy}{b}{n}

\usepackage{graphicx,epstopdf,epsfig}
\usepackage{float}                 
\floatstyle{plaintop}\restylefloat{table}
\usepackage{booktabs,array,tabularx,multirow,makecell,hhline,relsize}
\usepackage{subcaption}
\usepackage{wrapfig,multicol,cuted}
\usepackage{perpage}\MakeSorted{figure}\MakeSorted{table}

\usepackage{algorithm}
\usepackage[noend]{algpseudocode}

\usepackage{tikz}\usetikzlibrary{positioning}
\usepackage{enumitem,etoolbox,lettrine,pifont}  
\usepackage[T1]{fontenc}
\usepackage{cite}
\usepackage[hidelinks]{hyperref}

\newtheorem{theorem}{Theorem}
\newtheorem{proposition}{Proposition}
\newtheorem{lemma}{Lemma}
\newtheorem{corollary}{Corollary}
\theoremstyle{remark}\newtheorem{remark}{Remark}

\makeatletter\def\BState{\State\hskip-\ALG@thistlm}\makeatother

\newcommand{\bH}{\mathbf{H}}
\newcommand{\bZ}{\mathbf{Z}}
\newcommand{\bF}{\mathbf{F}}
\newcommand{\bI}{\mathbf{I}}
\newcommand{\bR}{\mathbf{R}}
\newcommand{\bx}{\mathbf{x}}
\newcommand{\by}{\mathbf{y}}
\newcommand{\bn}{\mathbf{n}}
\newcommand{\bs}{\mathbf{s}}
\newcommand{\bz}{\mathbf{z}}

\newcommand{\bv}{\mathbf{v}}
\newcommand{\br}{\mathbf{r}}
\newcommand{\bPi}{\bm{\Pi}}
\newcommand{\bDelta}{\bm{\Delta}}
\newcommand{\bPsi}{\bm{\Psi}}
\newcommand{\bPhi}{\bm{\Phi}}
\newcommand{\bL}{\mathbf{L}}
\newcommand{\bE}{\mathbf{E}}
\newcommand{\bG}{\mathbf{G}}
\newcommand{\bB}{\mathbf{B}}
\newcommand{\bT}{\mathbf{T}}
\newcommand{\bC}{\mathbf{C}}
\newcommand{\bK}{\mathbf{K}}
\newcommand{\bg}{\mathbf{g}}
\newcommand{\bff}{\mathbf{f}}
\newcommand{\hmn}{H}
\newcommand{\Z}{\mathbb{Z}}
\begin{document}
	\title{Delay–Doppler Duality: Simplifying Exact Equalization of Zak-OTFS in Time or in Frequency}
	\author{Mudasir Ahmad Sheikh and Robert Calderbank
    \thanks{M. A. Sheikh and R. Calderbank are with the Department of Electrical and Computer Engineering, Duke University, Durham, NC, USA (e-mail: mudasir.sheikh@duke.edu; robert.calderbank@duke.edu). R. Calderbank is also with Cohere Technologies Inc., CA, USA. This work is supported by the National Science Foundation under grant 2342690 and by Cohere Technologies.}}
	\maketitle
	\IEEEpeerreviewmaketitle
	
\begin{abstract}
The Zak-OTFS carrier waveform is a pulse in the delay--Doppler (DD) domain and given the input--output (I/O) relation at one DD point in a frame, we can predict the I/O relation at all other points. While it is simple to acquire the effective wireless channel with Zak-OTFS, the cost of embracing inter-carrier interference is equalization complexity. The cost of linear minimum mean-square error (LMMSE) equalization grows cubically with the frame size. We describe how to use channel knowledge to reduce equalization complexity. For any pulse shaping filter, and for fractional delays and Dopplers, we show that the DD channel matrix is unitarily similar, through the discrete Zak transform, to a time-domain (TD) matrix that is a weighted sum of cyclic delay and Doppler operators, where the weights are the taps of the effective channel. We also obtain a dual frequency-domain (FD) form through an additional Fourier transform. The TD matrix is banded with a width set by the delay spread that is independent of the Doppler spread, whereas the FD matrix is banded with a width that is set only by the Doppler spread. We design a receiver that reads off the two widths from the channel estimate, selects the cheaper domain, and solves the banded system exactly, by a bordered Cholesky factorization that treats the quasi-periodic wrap-around without sacrificing degrees of freedom. Filters that decay slowly need a few preconditioned conjugate gradient steps. We show that the complexity of equalization becomes linear in the frame size. Simulations match direct DD-LMMSE error rates with three to four orders of magnitude fewer multiplications. A single receiver is able to handle channels with very different delay and Doppler characteristics.
\end{abstract}

\begin{IEEEkeywords}
Zak-OTFS, delay--Doppler, LMMSE equalization, Zak transform, banded matrices, Cholesky factorization.
\end{IEEEkeywords}

\section{Introduction}
\IEEEPARstart{T}{he} challenge facing sixth generation (6G) wireless systems is reliable operation in a wide variety of high-mobility scenarios, including high-speed rail \cite{zhu2013tdd,hasegawa2018high,liu2019recent}, vehicle-to-everything (V2X) links, unmanned aerial vehicles, and low-Earth-orbit (LEO) satellites \cite{9849120,zhang2025leo}, where severe Doppler spreads and long delay spreads occur together \cite{wei2021orthogonal}. One approach is to continue with orthogonal frequency division multiplexing (OFDM), the waveform adopted in 4G/5G, and to develop custom signal processing chains for a variety of channel models \cite{6g_1}. We propose to start from an estimate of the effective channel (formed by transmit filter, physical channel, and receive filter), and to develop a single receiver that is able to handle a variety of channels.

Delay--Doppler (DD) methods (Zak-OTFS) \cite{book,bits22,bits23} and time--frequency methods (OFDM) offer different ways of balancing channel estimation and equalization. In OFDM, if the effective channel is known, then single-tap equalization is possible when inter-carrier interference (ICI) is commensurate with noise. However, in high-mobility scenarios, it is highly non-trivial to acquire the effective channel, and OFDM performance degrades severely \cite{wang}. 
In Zak-OTFS, the carrier waveform is a localized pulse in the DD domain, making it possible to separate reflectors along both delay and Doppler simultaneously, and to achieve a high-resolution DD radar image of the environment. Zak-OTFS \cite{book,bits22,bits23} removes this limitation by making each carrier a quasi-periodic pulse in the DD domain, parameterized by a delay period and a Doppler period whose product is unity. 
When the delay period is greater than the effective channel delay spread, and the Doppler period is greater than the effective Doppler spread, there is no aliasing in the DD domain and the Zak-OTFS input--output (I/O) relation is predictable and non-fading \cite{bits23,bits22}. Zak-OTFS requires no cyclic prefix (CP) \cite{bits23}, and we can read off the effective channel at every carrier of the frame from the channel response to a single DD pilot (possibly a spread pilot superimposed on data as in \cite{ubadah}). In Zak-OTFS it is straightforward to acquire the effective channel and we do not need to rely on the accuracy of a parametric channel model \cite{bits23,mattuTut}. The region of predictable operation can be widened further by interleaving pilots \cite{jayachandran2025interleaved}, and the spectral efficiency of the frame can be improved by mounting symbols on mutually unbiased bases \cite{mub2026}; both are orthogonal to the equalization problem studied here. In this paper we describe how to use knowledge of the effective channel to reduce equalization complexity.

We do not consider variants of Zak-OTFS, such as MC-OTFS \cite{hadani, raviteja2018interference, sheikh2024bayesian,sheikh2025RISOTFS}, that are built on top of time--frequency multicarrier modulation, because the effective channel is not \emph{predictable} when the Doppler spread is large \cite{bits23,mattuFD}. We note that pulse-shaping filters that trade time and bandwidth expansion against predictability have been characterized in \cite{jayachandran}. We also do not consider implementation, but we note that over-the-air prototypes with spread pilots have been reported at sub-6~GHz and sub-THz frequencies \cite{zheng,parisi}.

Zak-OTFS embraces inter-carrier interference and requires a receiver that performs joint equalization of all carriers. Linear minimum mean-square error (LMMSE) equalization in the DD domain requires inversion of a large matrix with no apparent structure. The size of the matrix is the number of carriers in a frame, and the cost of inversion is cubic in the frame size \cite{mattuFD}. For a typical frame of 31 delay bins and 37 Doppler bins, this amounts to about a billion complex multiplications per frame, far beyond real-time budgets. A recent tutorial \cite{mattuTut} highlights overcoming this cubic barrier as an important open problem. Low-complexity linear equalizers have been studied for MC-OTFS. 
Low-complexity linear equalizers have been studied for MC-OTFS. Assuming a rectangular pulse, a CP, and integer delays and Dopplers, Tiwari \emph{et al.} \cite{tiwari} observed that the time-domain (TD) channel matrix is quasi-banded, and implemented LMMSE equalization through a partitioned lower--upper (LU) factorization in time followed by fast Fourier transforms (FFTs). Surabhi and Chockalingam \cite{surabhi} exploited block circulant channel structure for the same purpose. Singh \emph{et al.} \cite{9933842} designed a low-complexity LMMSE receiver for MIMO-OTFS with practical pulse shaping, so that non-ideal filters are not by themselves an obstacle; the structure they exploit is still that of a CP-based multicarrier system with on-grid taps. For massive MIMO OTFS, Sheikh \emph{et al.} \cite{sheikh2022low} observed that the channel matrix is multi-banded with a large bandwidth, since its non-zero entries occur in several scattered bands, and reduced that bandwidth by reordering the matrix with the Reverse Cuthill--McKee algorithm before an exact MMSE solve. All these works neglect fractional delays and Dopplers, making it difficult to apply them to Zak-OTFS, where filters and off-grid paths spread the contribution of every path to the effective channel over many DD taps. All these rely on a CP and on ideal pulse shaping to create the cyclic structure that Zak-OTFS obtains directly from quasi-periodicity. Further, and all three assemble the system matrix from path parameters, whereas a Zak-OTFS receiver acquires the effective channel as a sampled DD function without \emph{a priori} knowledge of a path model \cite{bits23,ubadah}. Finally, the LU factorization given in \cite{tiwari} ignores that the LMMSE system matrix is Hermitian positive definite, and does not account for the phase that arises when the order of multiplying a delay and a Doppler operator is reversed.

Mohammed \emph{et al.} \cite{mohammedFD} revealed structure in the Zak-OTFS system matrix by showing that the frequency-domain (FD) and DD domain I/O relations are unitarily equivalent, and that the FD channel matrix is banded along Doppler. This established the key idea of equalizing Zak-OTFS in a unitarily equivalent banded domain. However, the LMMSE estimate in \cite{mohammedFD} is computed by banded matrix inversion, at a cost that remains quadratic in the frame size. The cyclic wrap-around of the FD matrix is handled by introducing an extended symbol vector with edge entries that duplicate symbols from the opposite edge; the system matrix is strictly banded only because the duplicated entries are treated as distinct. Also, their regularization assumes white noise \cite[eq.~(33)]{mohammedFD}, though matched filtering with a Gaussian pulse and window colors the noise, and we will assume the more accurate model in our system model. The extended version by Mattu \emph{et al.} \cite{mattuFD} evaluates estimated channels with several pulse shapes and uses the conjugate gradient (CG) method to reduce equalization complexity to linear in the frame size. The cyclic wrap-around is removed by forcing the first and last FD samples of the frame to zero, since those samples cannot be controlled directly. The information symbols are mounted on the null space of the corresponding transform, which confines the data to a subspace of reduced dimension and sacrifices degrees of freedom that grow with the band width. Being iterative, the method also has a cost that depends on the condition number of the LMMSE system matrix, which increases with the signal-to-noise ratio (SNR) and with poorly conditioned channels, so that the processing latency is not fixed in advance.

Equalization is moreover confined to the FD, and the width of the banding grows with the Doppler spread, so the method becomes costlier in exactly the high-mobility regime that motivates Zak-OTFS. The TD alternative is set aside on the grounds that the TD channel matrix is dense for fractional, frequency-selective channels \cite{mattuFD}. 
Zak-OTFS has also been equalized directly in the TD by Zhang et al. \cite{zhang2025ogd}, where they apply offset gradient descent to the TD system. We differ in three respects. Their receiver is iterative, so its accuracy and its cost are set by the number of descent steps, whereas we solve exactly with an operation count that is fixed in advance. Their descent treats the TD system as unstructured, whereas we show that it is banded, with a width set by the delay spread alone, and it is this structure that makes a direct solve affordable. And they equalize in the TD only, whereas we select the cheaper of the two domains, which is worth up to a factor of $32$ in the channels considered here. Iterative solvers remain useful for slowly decaying filters, and there our factorization serves as their preconditioner rather than as a competitor.

Three open questions remain. First, there is no unified view of Zak-OTFS equalization in the TD and FD that covers practical filters and fractional channels, and no rule for choosing between the two domains. Second, there is no exact direct solver with linear complexity that represents the quasi-periodic wrap-around without sacrificing degrees of freedom or approximation, and for which the operation count is fixed in advance. Third, there is no systematic treatment either of slowly decaying filters, for which any banded truncation leaves a residual error floor, or of the noise coloring introduced by the receive filter. We answer these questions and summarize our main contributions as follows.
\begin{itemize}
\item Starting from the continuous-time system model with general pulse shaping filters, we derive the discrete-time Zak-OTFS channel matrix in closed form, without assuming integer taps. This matrix is governed by the cross-ambiguity function of the transmit pulse, which we evaluate in closed form for the Gaussian and sinc pulses. We describe how decay of the transmit pulse controls the width of the banding in the TD and FD representations.
\item We prove that the DD channel matrix is unitarily similar to a weighted sum of cyclic delay and Doppler operators, where the weights are the taps of the effective channel, and we show that one further Fourier transform exchanges the roles of delay and Doppler. The FD relation of \cite{mohammedFD} and the integer-tap multicarrier model of \cite{tiwari} are special cases of this single identity, and the correspondence between the channel taps and the matrix is one to one and computable with FFTs.
\item We show that the TD banding width is set by the delay spread alone, and the FD banding width by the Doppler spread alone. It follows that time-domain equalization is dense for slowly decaying filters but not in general. Since the energy on each diagonal is a marginal of the effective channel power, our receiver is able to read off both banding widths from the effective channel estimate in $\mathcal O(MN)$ operations and equalize in whichever domain is cheaper. We demonstrate a factor of 32 improvement for the channels we consider in this paper.
\item Our receiver selects the cheaper domain and solves the resulting cyclically banded system exactly by a bordered banded Cholesky factorization that confines every wrap-around entry. The computational cost is fixed in advance, linear in the frame size and quadratic in the number of retained diagonals, with no extended vector, no null-space mounting and no iterations. We show that noise coloring by the receive filter preserves the banding.
\item We show that with the sinc filter the effective channel decays only as the reciprocal of the delay index, so no affordable banding captures the effective channel, and truncation alone leaves an error floor. We remove that floor in two steps. First, a regularization that charges the discarded energy to the noise term, which costs nothing. Second, a few conjugate gradient iterations on the exact operator, preconditioned by the banded factorization already computed. Four such iterations reach the exact solution whereas CG without preconditioning needs sixteen.
\item We demonstrate that, with Gaussian filtering, the proposed equalizer achieves the same performance as direct DD-LMMSE while requiring three to four orders of magnitude fewer complex multiplications. For noisy channel estimates, the retained-band structure also suppresses weak, noise-dominated channel components, providing improved robustness over direct equalization.
\end{itemize}
There is significant recent interest in designing new modulation schemes for doubly-selective channels with large delay and Doppler spreads, where legacy modulation schemes based on time-frequency signal representations underperform. Multiple modulation schemes, e.g., in the DD, chirp, time-sequency, and other domains, have been proposed in the literature for this purpose, with varying implementation details. We have established that all previously proposed modulation schemes for doubly-selective signaling are instances of a single family of complex Hadamard-modulated pulse trains \cite{MMC}. We plan to extend the equalization framework developed in this paper to these modulations in subsequent work. 
	\section{System Model}\label{sec:model}
Let $\tau_p$ and $\nu_p=1/\tau_p$ be the delay and Doppler periods. The fundamental period is sampled with $M$ delay bins and $N$ Doppler bins, so the bin widths are $\tau_p/M=1/B$ and $\nu_p/N=1/T$, with bandwidth $B=M\nu_p$ and duration $T=N\tau_p$. There are $n\triangleq MN=BT$ symbols $x[k,l]$, $k\in\{0,\dots,M-1\}$, $l\in\{0,\dots,N-1\}$, drawn independent and identically distributed (i.i.d.) from a unit-energy constellation. 
\subsection{Transmitter}
At the transmitter, the symbols $x[k,l]$ are embedded into a
quasi-periodic discrete DD-domain signal, since the corresponding
time-domain (TD) realization is defined only for quasi-periodic signals
\cite{mohammedFD, bits22}, as
\begin{equation}\label{eq:xdd}
	x_{\mathrm{dd}}[k,l]
	=x[k\bmod M,\,l\bmod N]\,
	e^{j2\pi\lfloor k/M\rfloor l/N}.
\end{equation}
The resulting signal satisfies the quasi-periodicity condition with
periods $M$ and $N$ along the discrete delay and Doppler axes,
respectively. Specifically, for all $k,l,n,m\in\mathbb{Z}$,
\begin{equation}\label{eq:qp}
	x_{\mathrm{dd}}[k+nM,\,l+mN]
	=e^{j2\pi n l/N}x_{\mathrm{dd}}[k,l].
\end{equation}
The inverse discrete Zak transform (IDZT) maps $x_{dd}$ to a discrete-time sequence \cite{book,mohammedFD},
\begin{equation}\label{eq:idzt}
	s[k+qM]=\frac{1}{\sqrt N}\sum_{l=0}^{N-1}x_{dd}[k,l]\,e^{j2\pi\frac{ql}{N}},\qquad k\in\Z,\;q\in\Z.
\end{equation}
This defines $s[t]$ for every $t\in\Z$, because every integer can be written as $t=k+qM$. Pulse shaping with a TD pulse $w_1(t)$ and a TD window $W_2(t)$ gives the transmit signal \cite{mohammedFD}
\begin{align}
	x(t)&=\sqrt T\,w_1(t)\star\Big[W_2(t)\sum_{m\in\Z}s[m]\,\delta(t-m/B)\Big]\nonumber\\
	&=\sqrt T\sum_{m\in\Z}s[m]\,W_2(m/B)\,w_1(t-m/B).\label{eq:xt}
\end{align}
The window $W_2$ confines the signal to about $T$ seconds and the pulse $w_1$ confines it to about $B$ hertz. This pair of operations is exactly the time and frequency windowing of Zak-OTFS analyzed in \cite{gopalam2024windowing}. Equivalently, this is twisted convolution of $x_{dd}(\tau,\nu)$ with $w_{tx}(\tau,\nu)=w_1(\tau)w_2(\nu)$, where $w_2$ is the Fourier transform of $W_2$ \cite{mohammedFD}. Both filters have unit energy.
\subsection{Channel}
The physical channel consists of $P$ propagation paths,
\begin{equation}\label{eq:hphy}
	h_{\rm phy}(\tau,\nu)=\sum_{i=1}^{P}h_i\,\delta(\tau-\tau_i)\,\delta(\nu-\nu_i),
\end{equation}
whose delays $\tau_i$ and Dopplers $\nu_i$ are not constrained to the DD grid. The received signal is
\begin{equation}\label{eq:rt}
	r(t)=\sum_{i=1}^{P}h_i\,x(t-\tau_i)\,e^{j2\pi\nu_i(t-\tau_i)}+n(t),
\end{equation}
where $n(t)$ is white Gaussian noise with power spectral density $N_0$.

\subsection{Receiver}
The receiver matched-filters with $w_1$, applies the window, samples at $t/B$ and periodizes with period $n$ \cite{mohammedFD}:
\begin{align}
	y(t)&=\frac{1}{\sqrt T}W_2^*(t)\int w_1^*(u-t)\,r(u)\,du,\label{eq:yt}\\
	y_T[t]&=\sum_{p\in\Z}y\!\left(\frac{t+pn}{B}\right),\qquad t\in\{0,\dots,n-1\}.\label{eq:yT}
\end{align}
The discrete Zak transform (DZT) of $y_T$ gives the received DD samples
\begin{equation}\label{eq:dzt}
	y_{dd}[k,l]=\frac{1}{\sqrt N}\sum_{q=0}^{N-1}y_T[k+qM]\,e^{-j2\pi\frac{ql}{N}}.
\end{equation}

\subsection{DD Input--Output Relation}
Combining the transmit, channel, and receive operations in \eqref{eq:idzt}--\eqref{eq:yT}, followed by the DZT in \eqref{eq:dzt}, yields the DD-domain
input--output relation
\begin{equation}\label{eq:io}
	\begin{split}
		y_{dd}[k,l]=\sum_{k',l'\in\Z}&h_{\rm eff}[k',l']\,x_{dd}[k-k',l-l']\\
		&\times e^{j2\pi\frac{l'(k-k')}{n}}+n_{dd}[k,l],
	\end{split}
\end{equation}
where $h_{\rm eff}[k',l']=h_{\rm eff}(k'/B,l'/T)$ samples the effective channel $h_{\rm eff}=w_{rx}*_\sigma h_{\rm phy}*_\sigma w_{tx}$, and $*_\sigma$ is twisted convolution \cite{book}. Because of \eqref{eq:qp}, \eqref{eq:io} can be written as an $n$-periodic twisted convolution. It uses the periodized channel
\begin{equation}\label{eq:hper}
	h[k',l']=\sum_{\bar p,\bar q\in\Z}h_{\rm eff}[k'+\bar pn,\,l'+\bar qn],
\end{equation}
in place of $h_{\rm eff}$, with $k',l'$ summed over one period \cite{mohammedFD}. Stacking the $MN$ DD-domain samples according to the linear index
$i=kN+l$ gives the equivalent vector model \cite{raviteja2018interference}
\begin{equation}\label{eq:yHx}
	\by=\bH\bx+\bn,
\end{equation}
where $\bR$ captures noise coloring by the receive filter. Note that $ \mathbb E[\bx\bx^\hmn]=\bI,\;\mathbb E[\bn\bn^\hmn]=N_0\bR$.
\subsection{LMMSE Equalization and Its Direct Cost}
The LMMSE estimator follows from the orthogonality principle, $\mathbb E[(\bx-\mathbf W\by)\by^\hmn]=\mathbf 0$. This gives $\mathbf W=\mathbb E[\bx\by^\hmn](\mathbb E[\by\by^\hmn])^{-1}$ with $\mathbb E[\bx\by^\hmn]=\bH^\hmn$ and $\mathbb E[\by\by^\hmn]=\bH\bH^\hmn+N_0\bR$. Hence
\begin{equation}\label{eq:lmmse}
	\hat\bx=\bH^\hmn\left(\bH\bH^\hmn+N_0\bR\right)^{-1}\by.
\end{equation}
Forming $\bH\bH^\hmn$ costs $n^3/2$ complex multiplications (CMs), using Hermitian symmetry. The Cholesky factorization costs $n^3/6$ and the solves $\mathcal O(n^2)$. The direct cost is therefore about $\tfrac23 n^3$ CMs. To reduce this computational cost, first, the cost of solving \eqref{eq:lmmse} is not set by the amount of interference, which is fixed by the channel, but by how that interference is \emph{arranged} in the matrix: a system matrix whose nonzero entries lie close to the diagonal is factorized in a cost linear in its order, whereas a matrix with the same entries scattered is not \cite{golub}. Second, the DD symbol vector is only one of several equivalent descriptions of the same frame. The DZT \eqref{eq:dzt} relates it to the time-domain samples, and a further Fourier transform to the frequency-domain samples, both unitarily, so the LMMSE estimate is unchanged by working in either of them. Section~\ref{sec:dual} shows that these two descriptions arrange the interference far more favorably than the DD one, and quantifies how favorably in terms of the effective channel alone.

\section{Exact Discrete-Time Model and Delay--Doppler Duality}\label{sec:dual}
\subsection{Exact Time-Domain Channel}
\begin{proposition}[TD channel]\label{prop:td}
	Let $\hat W\triangleq\sqrt T\,W_2$, and let $t_p\triangleq(t+pn)/B$ and $t'_q\triangleq(t'+qn)/B$. Up to a common scale factor $1/T$ on signal and noise, the periodized samples \eqref{eq:yT} satisfy $\by_T=\bH_T\bs+\bn_T$, where $\bs=[s[0],\dots,s[n-1]]^{\mathsf T}$ and
	\begin{equation}\label{eq:HTphys}
		\begin{split}
			\bH_T[t,t']=\sum_{p,q\in\Z}&\hat W^*(t_p)\,\hat W(t'_q)\\
			&\times\sum_{i=1}^{P}h_i\,\chi_{\nu_i}(t_p-t'_q-\tau_i)\,e^{j\pi\nu_i(t_p+t'_q-\tau_i)}.
		\end{split}
	\end{equation}
	Here $\chi_\nu$ is the symmetric cross-ambiguity function of the pulse,
	\begin{equation}\label{eq:chi}
		\chi_\nu(d)\triangleq\int w_1^*\!\left(v-\tfrac d2\right)w_1\!\left(v+\tfrac d2\right)e^{j2\pi\nu v}\,dv.
	\end{equation}
\end{proposition}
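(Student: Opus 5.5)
Substitute the transmit signal \eqref{eq:xt} into the channel \eqref{eq:rt}, then into the matched filter \eqref{eq:yt}, and finally into the periodization \eqref{eq:yT}. Everything is linear, so we only need to track one transmitted sample $s[m]$ and one path $i$.

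\emph{Pair of filters for one path.} Path $i$ contributes
\[
\sqrt T\,s[m]W_2(m/B)\,h_i\,w_1(u-\tau_i-m/B)\,e^{j2\pi\nu_i(u-\tau_i)}
\]
to $r(u)$. After the matched filter, at time $t_p$ this becomes
\[
\frac{1}{\sqrt T}W_2^*(t_p)\sqrt T W_2(m/B)\,h_i\,s[m]\int w_1^*(u-t_p)\,w_1(u-\tau_i-m/B)\,e^{j2\pi\nu_i(u-\tau_i)}\,du .
\]

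\emph{The integral.} Substitute $u=v+\tfrac12(t_p+m/B+\tau_i)$ and set $d=t_p-m/B-\tau_i$. The two pulse arguments become $v+\tfrac d2$ and $v-\tfrac d2$, and the exponent becomes
\[
2\pi\nu_i v+\pi\nu_i(t_p+m/B-\tau_i).
\]
So the integral equals
\[
\chi_{\nu_i}(t_p-m/B-\tau_i)\,e^{j\pi\nu_i(t_p+m/B-\tau_i)},
\]
with $\chi_\nu$ as in \eqref{eq:chi}. Two details need checking here: which pulse carries the conjugate, and the sign of $d$. With this substitution the conjugated pulse is $w_1^*(v+\tfrac d2)$. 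If it lands on the wrong side, replace $v$ by $-v$, or note that $d\mapsto -d$ together with conjugation gives the same form. This bookkeeping is the one real obstacle, and I would fix it by checking the case $\nu=0$, $w_1$ real.

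\emph{Collecting the scale.} The factor $W_2^*(t_p)W_2(m/B)$ equals $\tfrac1T\hat W^*(t_p)\hat W(m/B)$, which is the common $1/T$ scale in the statement. The noise term passes through the same receive map, so it is scaled by the same factor once we rescale.

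\emph{Folding $m$ onto one period.} Write $m=t'+qn$ with $t'\in\{0,\dots,n-1\}$ and $q\in\Z$.

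1. We need $s[t'+qn]=s[t']$. From \eqref{eq:idzt} and \eqref{eq:qp}: $qn=qNM$, so the delay index shifts by $qN$ periods of $M$. Write $t'=k+rM$. Then
\[
s[k+(r+qN)M]=\frac{1}{\sqrt N}\sum_{l} x_{dd}[k,l]\,e^{j2\pi(r+qN)l/N},
\]
and $e^{j2\pi qNl/N}=1$. Hence $s$ is $n$-periodic.

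2. With $m/B=t'_q$ and the outer sum over $p$ from \eqref{eq:yT}, the output is
\[
y_T[t]=\sum_{t'}\bH_T[t,t']\,s[t']+n_T[t],
\]
where $\bH_T[t,t']$ is exactly \eqref{eq:HTphys}.

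3. Interchanging the sums over $p$, $q$, $i$ and the integral is justified by the decay of $w_1$ and $W_2$, assuming absolute summability.

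This gives $\by_T=\bH_T\bs+\bn_T$ with $\bs=[s[0],\dots,s[n-1]]^{\mathsf T}$, as claimed.
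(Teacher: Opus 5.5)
Your route is the same as the paper's: substitute \eqref{eq:xt} into \eqref{eq:rt} and \eqref{eq:yt}, use the midpoint change of variables to expose $\chi_{\nu_i}$ and the phase $e^{j\pi\nu_i(t_p+m/B-\tau_i)}$, read off the common $1/T$ from $W_2^*W_2=\hat W^*\hat W/T$, and fold $m=t'+qn$ using the $n$-periodicity of $s$. The only thing to correct is the ``obstacle'' you flag, because it does not exist. With your substitution, $u-t_p=v-\tfrac d2$ and $u-\tau_i-m/B=v+\tfrac d2$, so the conjugate already sits on $w_1(v-\tfrac d2)$ exactly as in \eqref{eq:chi}, and the integral you display is correct as it stands.

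Your suggested repairs would actually be wrong for a general (complex or non-even) pulse. The map $d\mapsto-d$ gives $\chi_{\nu_i}(-d)$ rather than $\chi_{\nu_i}(d)$. The map $v\mapsto-v$ reflects the pulse arguments and flips the sign of the Doppler phase. Your proposed check with $\nu=0$ and real $w_1$ also could not catch such an orientation error, since $\chi_0$ is then even in $d$. Drop the hedge and state the substitution result directly.
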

\begin{IEEEproof}
	Substituting \eqref{eq:xt} into \eqref{eq:rt} gives
	\begin{multline*}
		r(t)=\sqrt T\sum_i h_i\sum_m s[m]W_2(m/B)\\
		\times w_1(t-\tau_i-m/B)\,e^{j2\pi\nu_i(t-\tau_i)}+n(t).
	\end{multline*}
	Inserting this into \eqref{eq:yt}, the signal part of $y(t)$ is
	\[
	W_2^*(t)\sum_i h_i\sum_m s[m]W_2(m/B)\,I_i(t,m),
	\]
	where
	\[
	I_i(t,m)=\int w_1^*(u-\alpha)\,w_1(u-\beta)\,e^{j2\pi\nu_i(u-\tau_i)}\,du,
	\]
	with $\alpha=t$ and $\beta=\tau_i+m/B$.
	
	The integral is a cross-ambiguity in disguise. Substitute $u=\frac{\alpha+\beta}{2}+v$ and set $d=\alpha-\beta$. Then $u-\alpha=v-\frac d2$ and $u-\beta=v+\frac d2$, so
	\[
	I_i=e^{j2\pi\nu_i\left(\frac{\alpha+\beta}{2}-\tau_i\right)}\chi_{\nu_i}(d).
	\]
	Since $\frac{\alpha+\beta}{2}-\tau_i=\frac12\big(t+\frac mB-\tau_i\big)$ and $d=t-\frac mB-\tau_i$, the phase and the argument of $\chi$ take the form in \eqref{eq:HTphys}. To sample and periodize, evaluate at $t=(t_0+pn)/B$ and sum over $p$ as in \eqref{eq:yT}. The sequence $s$ is $n$-periodic by \eqref{eq:qpz}, so writing $m=t'+qn$ with $t'\in\{0,\dots,n-1\}$ gives $s[m]=s[t']$. Collecting the coefficient of $s[t']$ yields \eqref{eq:HTphys}, with $W_2^*W_2=\hat W^*\hat W/T$. The factor $1/T$ also multiplies the noise standard deviation, as shown in Section~\ref{sec:lmmse}, so it does not change the SNR and is dropped.
\end{IEEEproof}

\emph{Gaussian pulse.} For $w_1(t)=(2\alpha/\pi)^{1/4}e^{-\alpha t^2}$ with $\alpha=\alpha_g(B+2\nu_{\max})^2$ \cite{mohammedFD}, completing the square in \eqref{eq:chi} gives
\[
(v-\tfrac d2)^2+(v+\tfrac d2)^2=2v^2+\tfrac{d^2}{2}.
\]
The Gaussian integral $\int e^{-2\alpha v^2}e^{j2\pi\nu v}dv=\sqrt{\pi/(2\alpha)}\,e^{-\pi^2\nu^2/(2\alpha)}$ then gives
$\chi_\nu(d)=e^{-\alpha d^2/2}\,e^{-\pi^2\nu^2/(2\alpha)}.$
The window is $W_2(t)=\left(\frac{2}{\pi\beta_g(T+\tau_{\max})^2}\right)^{1/4}e^{-t^2/(\beta_g(T+\tau_{\max})^2)}$ \cite{mohammedFD}.

\emph{Sinc pulse.} For $w_1(t)=\sqrt B\,\mathrm{sinc}(Bt)$, the spectrum is $W_1(f)=B^{-1/2}\mathbf 1\{|f|<B/2\}$. Apply Parseval to \eqref{eq:chi}, with the spectra $W_1(f)e^{-j\pi fd}$ and $W_1(f-\nu)e^{j\pi(f-\nu)d}$. The spectra overlap on an interval of length $B-|\nu|$ centered at $\nu/2$, so
$\chi_\nu(d)=\frac{B-|\nu|}{B}\,\mathrm{sinc}\big((B-|\nu|)d\big),\qquad|\nu|<B.$
With the rectangular window $W_2(t)=T^{-1/2}\mathbf 1\{t\in[-\tfrac{1}{2B},T-\tfrac{1}{2B})\}$, only $p=q=0$ contributes to \eqref{eq:HTphys}. The Gaussian $\chi$ decays as $e^{-\alpha d^2/2}$, whereas the sinc $\chi$ decays only as $1/|d|$. 

The Gaussian pulse therefore confines the effective channel to a few delay taps, whereas the sinc tails decay too slowly for any narrow band to capture them. This is illustrated in Section~\ref{sec:results} which shows that with the Gaussian pulse the truncated solver is exact for practical purposes, whereas with the sinc pulse it leaves an error floor that only the refinement removes.

\subsection{Discrete Zak Transform}
Written as $\bx=\bZ\bs$, the DZT is a permutation followed by $M$ unitary $N$-point DFTs, $\bZ=(\bI_M\otimes\bF_N)\mathbf P$, and is therefore unitary. It also matches the two periodicities: if $s$ is $n$-periodic, then extending \eqref{eq:dzt} to all $k,l\in\Z$ gives a signal that is quasi-periodic in the sense of \eqref{eq:qp}, and conversely the IDZT \eqref{eq:idzt} of a quasi-periodic $x_{dd}$ is $n$-periodic. Indeed, periodicity in $l$ is immediate, while along delay
\begin{align}
	x_{dd}[k+M,l]&=\tfrac{1}{\sqrt N}\sum_{q=0}^{N-1}s[k+(q+1)M]e^{-j2\pi\frac{ql}{N}}\nonumber\\
	&=e^{j2\pi\frac lN}\tfrac{1}{\sqrt N}\sum_{q'=1}^{N}s[k+q'M]e^{-j2\pi\frac{q'l}{N}}\nonumber\\
	&=e^{j2\pi\frac lN}x_{dd}[k,l],\label{eq:qpz}
\end{align}
where the last step uses $s[k+NM]=s[k]$ and $e^{-j2\pi Nl/N}=1$; iterating gives \eqref{eq:qp}. Conversely, $s[k+(q+N)M]=\frac{1}{\sqrt N}\sum_l x_{dd}[k,l]e^{j2\pi(q+N)l/N}=s[k+qM]$ by \eqref{eq:idzt} and \eqref{eq:qp}. Quasi-periodicity in DD and $n$-periodicity in time are thus two views of the same property, and this is what creates the cyclic structure exploited in Section~\ref{sec:dual}, in place of the CP used in MC-OTFS.

With $\bx=\bZ\bs$ and $\by=\bZ\by_T$, Proposition~\ref{prop:td} gives $\bH=\bZ\bH_T\bZ^\hmn$. The next theorem expresses $\bH_T$ directly in terms of the DD channel $h[k',l']$.

\subsection{Duality Theorem}
The proof uses three elementary identities. First, $(\bDelta\bPi\bs)[t]=\omega^t s[t-1]$ while $(\bPi\bDelta\bs)[t]=\omega^{t-1}s[t-1]$, so $\bDelta\bPi=\omega\bPi\bDelta$ and, by iteration,
\begin{equation}\label{eq:comm}
	\bDelta^{l'}\bPi^{k'}=\omega^{k'l'}\,\bPi^{k'}\bDelta^{l'}.
\end{equation}
Second, $(\bF\bPi\bs)[f]=n^{-1/2}\sum_t s[t-1]\omega^{-ft}=\omega^{-f}(\bF\bs)[f]$ and $(\bF\bDelta\bs)[f]=(\bF\bs)[f-1]$, so
\begin{equation}\label{eq:fourierops}
	\bF\bPi\bF^\hmn=\bDelta^{-1},\qquad \bF\bDelta\bF^\hmn=\bPi.
\end{equation}
Third, $\bPi^{k'}\bDelta^{l'}$ has entries $\omega^{l'(t-k')}$ at positions $(t,t-k')$, so two such matrices with different delay indices have disjoint supports, and for equal delay indices the trace of the product is $\sum_t\omega^{(l''-l')(t-k')}=n\,\delta_{l'l''}$. Hence
\begin{equation}\label{eq:basis}
	\mathrm{tr}\big((\bPi^{k'}\bDelta^{l'})^\hmn\bPi^{k''}\bDelta^{l''}\big)=n\,\delta_{k'k''}\delta_{l'l''},
\end{equation}
and the $n^2$ matrices $\{\bPi^{k'}\bDelta^{l'}\}$ form an orthogonal basis of $\mathbb C^{n\times n}$.

\begin{theorem}[Delay--Doppler duality]\label{thm:dual}
	For every $n$-periodic effective channel $h[k',l']$ in \eqref{eq:hper}, the DD matrix in \eqref{eq:yHx} satisfies
	\begin{align}
		\bH&=\bZ\bH_T\bZ^\hmn,&\bH_T&=\sum_{k'=0}^{n-1}\sum_{l'=0}^{n-1}h[k',l']\,\bPi^{k'}\bDelta^{l'},\label{eq:HT}\\
		\bH_F&\triangleq\bF\bH_T\bF^\hmn,&\bH_F&=\sum_{k'=0}^{n-1}\sum_{l'=0}^{n-1}h[k',l']\,\bDelta^{-k'}\bPi^{l'}.\label{eq:HF}
	\end{align}
	Entrywise, $\bH_T[t,t-k']=g_{k'}[t]$ and $\bH_F[i,i-l']=f_{l'}[i]$, where
	\begin{equation}\label{eq:diags}
		g_{k'}[t]=\sum_{l'=0}^{n-1}h[k',l']\,\omega^{l'(t-k')},\;\; f_{l'}[i]=\sum_{k'=0}^{n-1}h[k',l']\,\omega^{-ik'}.
	\end{equation}
	The map $h\mapsto\bH_T$ is a bijection from $\mathbb C^{n\times n}$ onto $\mathbb C^{n\times n}$, with inverse
	\begin{equation}\label{eq:inv}
		h[k',l']=\frac1n\sum_{t=0}^{n-1}g_{k'}[t]\,\omega^{-l'(t-k')}.
	\end{equation}
\end{theorem}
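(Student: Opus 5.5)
The approach is to prove \eqref{eq:HT} by pulling the DD relation \eqref{eq:io} back to time through the DZT, and then to obtain everything else from the three identities \eqref{eq:comm}--\eqref{eq:basis}. Here $\bPi$ is the cyclic shift, $(\bPi\bs)[t]=s[t-1]$, and $\bDelta$ is the modulation, $(\bDelta\bs)[t]=\omega^t s[t]$ with $\omega=e^{j2\pi/n}$.

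\emph{Step 1: the time-domain form.} First compute the IDZT of a single twisted-convolution term. Fix $(k',l')$ and consider the DD signal $u[k,l]=x_{dd}[k-k',l-l']\,\omega^{l'(k-k')}$. Write $t=k+qM$ and apply \eqref{eq:idzt}. Shifting $l$ by $l'$ uses periodicity of $x_{dd}$ in $l$ and produces the factor $e^{j2\pi ql'/N}$. Combining this with $\omega^{l'(k-k')}$ gives $\omega^{l'(k-k'+qM)}=\omega^{l'(t-k')}$, since $qM/n=q/N$. Next, the shift $k\mapsto k-k'$ is absorbed through the quasi-periodicity \eqref{eq:qp} and the correspondence \eqref{eq:qpz}. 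Together these give
\[
u\;\xrightarrow{\text{IDZT}}\;\omega^{l'(t-k')}s[t-k']=(\bPi^{k'}\bDelta^{l'}\bs)[t].
\]
The main obstacle is the bookkeeping when $k-k'$ leaves $\{0,\dots,M-1\}$. To handle it, I would work throughout with the extended sequences on $\Z$. The DZT/IDZT pair is a bijection between $n$-periodic sequences and quasi-periodic DD arrays, so no boundary cases need to be treated separately. Summing over $(k',l')$ in one period, using the periodized channel \eqref{eq:hper}, gives $\bZ^\hmn\bH\bZ=\sum h[k',l']\bPi^{k'}\bDelta^{l'}$. Together with $\bH=\bZ\bH_T\bZ^\hmn$, which was stated after Proposition~\ref{prop:td}, this proves \eqref{eq:HT}. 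A second check is available: \eqref{eq:comm} confirms that the phase ordering (delay first, then Doppler) matches the factor $\omega^{l'(k-k')}$ in \eqref{eq:io}.

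\emph{Step 2: the frequency-domain form.} Conjugate \eqref{eq:HT} termwise by $\bF$ and insert $\bF\bPi^{k'}\bDelta^{l'}\bF^\hmn=(\bF\bPi\bF^\hmn)^{k'}(\bF\bDelta\bF^\hmn)^{l'}$. By \eqref{eq:fourierops} this equals $\bDelta^{-k'}\bPi^{l'}$, which gives \eqref{eq:HF}.

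\emph{Step 3: the entries.} Read these off directly. The matrix $\bPi^{k'}\bDelta^{l'}$ has entry $\omega^{l'(t-k')}$ at $(t,t-k')$, and the terms with different $k'$ lie on disjoint cyclic diagonals. Collecting the terms with a fixed $k'$ gives $g_{k'}[t]$. Similarly, $(\bDelta^{-k'}\bPi^{l'})[i,i-l']=\omega^{-ik'}$, which gives $f_{l'}[i]$.

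\emph{Step 4: bijection and inversion.} By \eqref{eq:basis}, the $n^2$ matrices $\bPi^{k'}\bDelta^{l'}$ are mutually orthogonal with squared norm $n$. They therefore form a basis, and the linear map $h\mapsto\bH_T$ sends the standard basis to this orthogonal basis, so it is a bijection. The inverse is $h[k',l']=\tfrac1n\mathrm{tr}\big((\bPi^{k'}\bDelta^{l'})^\hmn\bH_T\big)$. Written on the single diagonal $k'$, this trace is $\tfrac1n\sum_t g_{k'}[t]\,\omega^{-l'(t-k')}$, which is \eqref{eq:inv}. Equivalently, it is an inverse DFT in $l'$ of the relation in \eqref{eq:diags}.
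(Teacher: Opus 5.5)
Your proposal is correct and follows essentially the same route as the paper. In Step 1 you apply the IDZT to the DD twisted-convolution term and recover $\bPi^{k'}\bDelta^{l'}\bs$, while the paper applies the DZT to $\bPi^{k'}\bDelta^{l'}\bs$ and recovers the DD term; since $\bZ$ is unitary this is the same computation read in reverse, and your Steps 2--4 (conjugation via \eqref{eq:fourierops}, reading off the diagonals, and inversion via the orthogonal basis \eqref{eq:basis}) coincide with the paper's.
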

\begin{IEEEproof}
	Consider first a single tap. Let $\bs=\bZ^\hmn\bx$, which is $n$-periodic by \eqref{eq:qpz}. Fix a tap $(k',l')$ and define
	\[
	r[t]\triangleq\omega^{l'(t-k')}s[t-k']=(\bPi^{k'}\bDelta^{l'}\bs)[t].
	\]
	This sequence is also $n$-periodic.	
	Its DZT is, by \eqref{eq:dzt},
	\begin{align*}
		(\bZ\br)[k,l]&=\tfrac{1}{\sqrt N}\sum_{q}s[k-k'+qM]\,\omega^{l'(k-k'+qM)}e^{-j2\pi\frac{ql}{N}}\\
		&\overset{(a)}{=}\omega^{l'(k-k')}\tfrac{1}{\sqrt N}\sum_q s[(k-k')+qM]\,e^{-j2\pi\frac{q(l-l')}{N}}\\
		&\overset{(b)}{=}e^{j2\pi\frac{l'(k-k')}{n}}\,x_{dd}[k-k',\,l-l'].
	\end{align*}
	Here $(a)$ uses $\omega^{l'qM}=e^{j2\pi l'q/N}$, and $(b)$ uses the DZT of $s$ evaluated at $(k-k',l-l')$, which by \eqref{eq:qpz} is the quasi-periodic extension $x_{dd}$, even when $k-k'\notin\{0,\dots,M-1\}$.
	This is exactly the $(k',l')$ term of the $n$-periodic form of \eqref{eq:io}. Summing over $(k',l')$ with weights $h[k',l']$ shows $\by-\bn=\bZ\big(\sum h[k',l']\bPi^{k'}\bDelta^{l'}\big)\bs=\bZ\bH_T\bZ^\hmn\bx$ for every $\bx$, which proves \eqref{eq:HT}.
	For the FD form, identity \eqref{eq:fourierops} gives $\bF\bPi^{k'}\bDelta^{l'}\bF^\hmn=\bDelta^{-k'}\bPi^{l'}$, which proves \eqref{eq:HF}.
	Reading off the diagonals, $\bPi^{k'}\bDelta^{l'}$ has entry $\omega^{l'(t-k')}$ at $(t,t-k')$, and $\bDelta^{-k'}\bPi^{l'}$ has entry $\omega^{-ik'}$ at $(i,i-l')$. This gives \eqref{eq:diags}.
	Finally, by \eqref{eq:basis}, the $n^2$ matrices $\{\bPi^{k'}\bDelta^{l'}\}$ form a basis of $\mathbb C^{n\times n}$. Taking the inner product of $\bH_T$ with $\bPi^{k'}\bDelta^{l'}$ gives \eqref{eq:inv}.
\end{IEEEproof}

\begin{remark}
	Proposition~\ref{prop:td} and Theorem~\ref{thm:dual} describe the same matrix. Proposition~\ref{prop:td} builds $\bH_T$ from path parameters, whereas \eqref{eq:diags} builds it from the DD channel estimate that a Zak-OTFS receiver acquires \cite{bits23,ubadah}. If $h[k',l']$ is supported on $D_T$ delay taps and $D_\nu$ Doppler taps, each $\bg_{k'}$ is a zero-padded $n$-point inverse FFT (IFFT) of the Doppler profile $h[k',\cdot]$, followed by a modulation. The TD channel therefore costs $D_T\,\frac n2\log_2 n$ CMs, and the FD channel $D_F\,\frac n2\log_2 n$. The $f_b$ in \eqref{eq:diags} coincides with the FD channel $h_f$ of \cite[Thm.~1]{mohammedFD}; Theorem~\ref{thm:dual} adds its TD dual and the unifying operator form.
\end{remark}


\subsection{Structural Consequences}
\begin{corollary}[Banded structure]\label{cor:band}
	If $h[k',l']=0$ for all $k'\notin\mathcal K$, where $\mathcal K$ is a set of $D_T$ cyclically consecutive delay taps, then $\bH_T$ has nonzero entries only on the $D_T$ cyclic diagonals $\{k'\in\mathcal K\}$. This holds for \emph{any} Doppler support. Dually, if $h[k',l']=0$ for $l'\notin\mathcal L$ with $|\mathcal L|=D_F$, then $\bH_F$ has $D_F$ cyclic diagonals, for any delay support. In the delay-major DD ordering, by contrast, $\bH$ is block-banded: its $N\times N$ blocks are nonzero only for $D_T$ block offsets, and each block is cyclically banded along Doppler with $D_F$ diagonals. The \emph{scalar} half-bandwidth of $\bH$ is therefore about $N D_T$.
\end{corollary}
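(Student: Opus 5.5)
The TD and FD statements follow directly from the entrywise description in Theorem~\ref{thm:dual}; the DD statement needs one extra computation of the block structure of $\bH=\bZ\bH_T\bZ^\hmn$.

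\emph{TD and FD banding.} By \eqref{eq:HT} and \eqref{eq:diags}, $\bH_T=\sum_{k'}\sum_{l'}h[k',l']\bPi^{k'}\bDelta^{l'}$. The term $\bPi^{k'}\bDelta^{l'}$ is supported only on the cyclic diagonal $\{(t,t-k')\}$, whatever the value of $l'$. So if $h[k',\cdot]\equiv 0$ for $k'\notin\mathcal K$, every diagonal $g_{k'}$ with $k'\notin\mathcal K$ vanishes identically. The Doppler index enters only through the modulation $\omega^{l'(t-k')}$ inside $g_{k'}$, so the Doppler support cannot create new diagonals. The FD claim is the same argument applied to \eqref{eq:HF}: $\bDelta^{-k'}\bPi^{l'}$ lives on the diagonal $\{(i,i-l')\}$, so only $l'\in\mathcal L$ contributes, and $k'$ appears only in the phase $\omega^{-ik'}$ of $f_{l'}$.

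\emph{DD block structure.} I will work directly with the $n$-periodic form of \eqref{eq:io}. Take the delay-major index $i=kN+l$ and fix an output delay $k\in\{0,\dots,M-1\}$. A tap $(k',l')$ couples $y_{dd}[k,l]$ to $x_{dd}[k-k',l-l']$. I reduce $k-k'$ modulo $M$ using quasi-periodicity \eqref{eq:qp}, which only contributes a phase $e^{j2\pi\lfloor (k-k')/M\rfloor (l-l')/N}$. The input symbol therefore sits in the block of delay $(k-k')\bmod M$. Since $\mathcal K$ consists of $D_T$ cyclically consecutive integers, for each $k$ the set $\{(k-k')\bmod M:k'\in\mathcal K\}$ has at most $D_T$ elements. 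Hence each block row has at most $D_T$ nonzero $N\times N$ blocks, at cyclic block offsets determined by $\mathcal K$. This assumes $D_T\le M$, which is the no-aliasing regime; otherwise the block-bandwidth simply saturates.

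Within one block, the Doppler coupling is $l\mapsto(l-l')\bmod N$ for $l'\in\mathcal L$. The quasi-periodic phase does not move entries, so each block is cyclically banded along Doppler with at most $D_F$ diagonals.

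\emph{Scalar half-bandwidth.} Block offsets of up to about $D_T$ blocks, each of size $N$, place nonzeros at scalar distance up to about $N D_T$ from the main diagonal. I will state this as an order estimate, "about $N D_T$", as the corollary does. The main obstacle is bookkeeping rather than substance. The periodized channel \eqref{eq:hper} is indexed modulo $n$, while DD blocks wrap modulo $M$ and $N$, so I must check that one tap maps to a single block offset and a single in-block Doppler diagonal. I would verify this by writing $k'=aM+b$ and $l'=cN+d$, and using \eqref{eq:qp} to absorb $a$ and $c$ into phases.
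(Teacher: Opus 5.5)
Your proposal is correct and follows essentially the same route as the paper: the TD and FD banding is read off from the diagonal formulas \eqref{eq:diags}, and the DD block structure comes from the $n$-periodic form of \eqref{eq:io}, where quasi-periodicity sends tap $(k',l')$ to block offset $k'\bmod M$ and in-block Doppler diagonal $l'\bmod N$, which gives the order-$ND_T$ scalar half-bandwidth. Your extra bookkeeping (the phase from quasi-periodicity, the decomposition $k'=aM+b$, $l'=cN+d$, and the caveat $D_T\le M$) spells out what the paper's short proof leaves implicit.
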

\begin{IEEEproof}
	The TD and FD statements follow from \eqref{eq:diags}. For the DD statement, the delay-major index $kN+l$ groups rows and columns by delay bin. By \eqref{eq:io}, the entry at row $kN+l$ and column $k'N+l'$ is nonzero only when $(k-k')\bmod M\in\mathcal K\bmod M$ and $(l-l')\bmod N\in\mathcal L\bmod N$. The index offset between row and column is $(k-k')N+(l-l')$, whose magnitude reaches about $N D_T$.
\end{IEEEproof}

\noindent Corollary~\ref{cor:band} explains why the DD domain is the wrong place to factorize. A banded Cholesky factorization of an $n\times n$ matrix with half-bandwidth $w$ costs $\mathcal O(nw^2)$ \cite{golub}. With $w\approx ND_T$ in DD, this is $\mathcal O(nN^2D_T^2)$. In TD, the same channel has $w\approx D_T$, which is $N^2$ times cheaper. The DZT does not remove any interference; it reorders the problem so that interference only couples nearby indices.

\begin{lemma}[Band energy profiles]\label{lem:energy}
	Let $\bg_{k'}$ and $\bff_{l'}$ be the cyclic diagonals of $\bH_T$ and $\bH_F$ defined in \eqref{eq:diags}, and let $\mathbf h_{k'}\triangleq h[k',\cdot]$ and $\mathbf h^{\,\prime}_{l'}\triangleq h[\cdot,l']$ be the Doppler profile of delay tap $k'$ and the delay profile of Doppler tap $l'$. Then the energy carried by each diagonal is
	\begin{equation}\label{eq:energy}
		\|\bg_{k'}\|^2=n\,\|\mathbf h_{k'}\|^2,\qquad\|\bff_{l'}\|^2=n\,\|\mathbf h^{\,\prime}_{l'}\|^2,
	\end{equation}
	that is, $n$ times the delay marginal and $n$ times the Doppler marginal of the effective channel power $|h[k',l']|^2$. Summing \eqref{eq:energy} gives $\|\bH_T\|_F^2=\|\bH_F\|_F^2=n\sum_{k',l'}|h[k',l']|^2$, so the two domains distribute the same total energy over their diagonals, and only its \emph{concentration} differs.
\end{lemma}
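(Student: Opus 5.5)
The plan is to apply Parseval's identity for the $n$-point DFT to the diagonal formulas \eqref{eq:diags} of Theorem~\ref{thm:dual}. The key observation is that each diagonal is, up to a unimodular modulation, the (unnormalized) DFT of one row or one column of the periodized channel $h[k',l']$. Throughout, $\omega=e^{j2\pi/n}$, consistent with the usage in \eqref{eq:comm} and \eqref{eq:diags}.

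\emph{Step 1 (TD diagonals).} Fix $k'$. Factor
\[
g_{k'}[t]=\omega^{-l'k'}\cdots
\]
more precisely, write $g_{k'}[t]=\sum_{l'}\big(h[k',l']\,\omega^{-l'k'}\big)\omega^{l't}$. This is $\sqrt n$ times the unitary inverse DFT, evaluated at $t$, of the sequence $a[l']\triangleq h[k',l']\omega^{-l'k'}$. Since $|a[l']|=|h[k',l']|$, Parseval gives
\[
\sum_{t=0}^{n-1}|g_{k'}[t]|^2=n\sum_{l'}|a[l']|^2=n\|\mathbf h_{k'}\|^2 .
\]
An alternative that avoids any normalization bookkeeping is to expand $|g_{k'}[t]|^2$ as a double sum over $l',l''$, sum over $t$ first, and apply the orthogonality relation $\sum_t\omega^{(l'-l'')t}=n\,\delta_{l'l''}$. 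This is the same computation that underlies \eqref{eq:basis}.

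\emph{Step 2 (FD diagonals).} Fix $l'$. Then $f_{l'}[i]=\sum_{k'}h[k',l']\omega^{-ik'}$ is $\sqrt n$ times the unitary DFT of the column $\mathbf h'_{l'}$, so the identical argument yields $\|\bff_{l'}\|^2=n\|\mathbf h'_{l'}\|^2$. Equivalently, the FD result can be read off from the TD one via $\bH_F=\bF\bH_T\bF^\hmn$ together with \eqref{eq:fourierops}, but the direct computation is shorter.

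\emph{Step 3 (totals).} By Theorem~\ref{thm:dual}, $\bg_{k'}$ is the $k'$-th cyclic diagonal of $\bH_T$. The $n$ cyclic diagonals partition all $n^2$ entries of the matrix, because each pair $(t,t')$ lies on exactly one diagonal, namely $k'=(t-t')\bmod n$. Hence $\|\bH_T\|_F^2=\sum_{k'}\|\bg_{k'}\|^2=n\sum_{k',l'}|h[k',l']|^2$, and the same argument applies to $\bH_F$. As a consistency check, unitary invariance of the Frobenius norm under $\bF$ also gives $\|\bH_F\|_F=\|\bH_T\|_F$ directly.

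No step is a real obstacle. The only point requiring care is in Step 1: the modulation $\omega^{-l'k'}$ must be recognized as unimodular and independent of $t$, so that it does not affect the Parseval sum, and the index ranges must be handled cyclically modulo $n$ for the periodized channel \eqref{eq:hper}.
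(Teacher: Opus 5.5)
Your proposal is correct and follows essentially the same route as the paper: absorb the unimodular factor $\omega^{-k'l'}$ into the Doppler profile, recognize $\bg_{k'}$ as $\sqrt n\,\bF^\hmn$ applied to it (and $\bff_{l'}$ as $\sqrt n\,\bF$ applied to the delay profile), apply Parseval, and then sum over the diagonals to obtain the Frobenius norms. Your explicit remark that the $n$ cyclic diagonals partition the $n^2$ entries, and your cross-check via unitary invariance of the Frobenius norm under $\bF$, only spell out the paper's final step in more detail.
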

\begin{IEEEproof}
	By \eqref{eq:diags}, $g_{k'}[t]=\sum_{l'}\tilde h_{k'}[l']\,\omega^{l't}$ with $\tilde h_{k'}[l']\triangleq h[k',l']\,\omega^{-k'l'}$, so $\tilde{\mathbf h}_{k'}$ is $\mathbf h_{k'}$ multiplied by a unimodular sequence and $\|\tilde{\mathbf h}_{k'}\|=\|\mathbf h_{k'}\|$. Hence $\bg_{k'}=\sqrt n\,\bF^\hmn\tilde{\mathbf h}_{k'}$, and Parseval gives $\|\bg_{k'}\|^2=n\|\tilde{\mathbf h}_{k'}\|^2=n\|\mathbf h_{k'}\|^2$. The FD case is identical with the roles of $k'$ and $l'$ exchanged. The last claim follows by summing over $k'$ and over $l'$, respectively, and noting that both sums equal the squared Frobenius norm.
\end{IEEEproof}
\begin{figure}[t]\centering
	\includegraphics[width=\columnwidth]{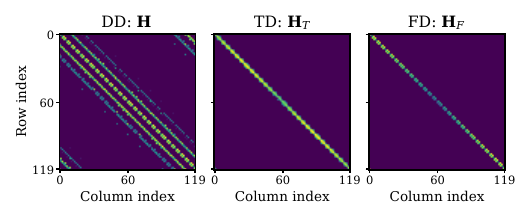}
	\caption{Magnitude of one Zak-OTFS channel in the DD, TD and FD domains ($M=12$, $N=10$, $n=120$, Veh-A, $\nu_{\max}=3$~kHz, Gaussian filter). }\label{fig:structure}
\end{figure}

Lemma~\ref{lem:energy} has a direct practical interpretation. The two
marginals of $|h[k',l']|^2$ can be obtained directly from the estimated
DD-domain channel and quantify the channel energy captured by a given
bandwidth in the TD and FD representations. Since these marginals
require only $\mathcal{O}(n)$ operations, they can be evaluated before
constructing the LMMSE matrix and used to select the processing domain
as described in Section~\ref{sec:trunc}. Fig.~\ref{fig:structure}
illustrates the corresponding matrix structures for one channel
realization.
For $M=12$, $N=10$, $B=360$~kHz and $T=0.33$~ms, the delay and
Doppler spreads are $B\tau_{\max}=0.9$ and
$2T\nu_{\max}=2$ bins, respectively. The three matrices contain the
same channel-induced interference, but its arrangement differs across
the domains, which directly determines the cost of equalization.

\begin{itemize}
	\item \emph{DD domain (left):} The channel matrix exhibits a
	two-level block-banded structure. Nonzero $N\times N$ blocks occur
	over five block offsets, $-2,\ldots,2$, while each block contains
	energy on a small number of cyclic Doppler diagonals. The
	quasi-periodicity also produces wrap-around blocks at the matrix
	boundaries. Consequently, the effective scalar half-bandwidth is
	approximately $ND_T=50$, illustrating why direct DD-domain
	factorization is considerably more expensive
	(Corollary~\ref{cor:band}).
	
	\item \emph{TD domain (center):} The energy is concentrated on five
	cyclic diagonals, with offsets $-2,\ldots,2$ at the $-50$~dB level.
	Thus, the scalar half-bandwidth is reduced from approximately $50$
	in the DD domain to $2$. The small corner entries represent the
	quasi-periodic wrap-around terms; although they contain only about
	$10^{-3}$ of the total energy, retaining them is necessary for an
	exact solution and is handled by the bordered factorization in
	\eqref{eq:part}.
	
	\item \emph{FD domain (right):} The energy is concentrated on three
	cyclic diagonals, with offsets $-1,0,1$, consistent with
	$2T\nu_{\max}+1=3$ from \eqref{eq:Dheur}. In this representation,
	the effective bandwidth is determined by the Doppler spread,
	independently of the delay spread.
\end{itemize}
For this realization, both the TD and FD representations substantially
reduce the scalar bandwidth relative to the DD representation, with the
FD representation being more compact because the Doppler spread is
smaller in normalized bins. At higher Doppler spreads, however, the
relative bandwidths can reverse, motivating the adaptive domain
selection rule in Section~\ref{sec:trunc}.
\section{Proposed Low-Complexity LMMSE Receiver}\label{sec:rx}
The same LMMSE problem can be solved in the DD domain, where the system matrix is full, or in either of two domains where it is banded. The receiver proposed here makes that choice per channel realization, and four principles guide its design. The change of domain must be exact, so that no loss is incurred beyond a deliberate and controlled band truncation. The domain and the retained band must be chosen per channel realization, from the DD channel estimate that a Zak-OTFS receiver already has. The resulting cyclically banded system must be solved by a direct factorization that treats the quasi-periodic wrap-around exactly and has a fixed operation count. Finally, iterations should be added only when the filter tails are too heavy for a narrow band, as with the sinc filter, whose effective channel decays only as the reciprocal of the delay index. Fig.~\ref{fig:rx} shows the resulting architecture. The \emph{channel path} (bottom row) depends only on the channel estimate and the noise level $N_0$; it selects the domain and computes a factorization. The \emph{signal path} (top row) applies two transforms, one banded solve and one banded product to the received samples. We describe each stage of a proposed low-complexity receiver in order below in Fig.~\ref{fig:rx}.
\begin{figure*}[t]\centering
	\begin{tikzpicture}[font=\scriptsize, >=latex, rounded corners=1.5pt,
		blk/.style={draw, rounded corners=1.5pt, align=center, minimum height=9mm, inner xsep=3pt, inner ysep=2.5pt},
		key/.style={blk, fill=blue!7},
		opt/.style={blk, dashed},
		lbl/.style={align=center}]
		\node[lbl] (y) {$\by$\\(DD)};
		\node[blk, right=2.4mm of y] (rot) {rotate: $\bZ^\hmn$,\\then $\bF$ if $X{=}F$};
		\node[key, right=2.4mm of rot] (solve) {solve $\bPsi_X\bz=\by_X$\\bordered substitution};
		\node[opt, right=2.4mm of solve] (pcg) {PCG\\refinement};
		\node[blk, right=2.4mm of pcg] (prod) {$\bH_X^\hmn\bz$ from $\bg_{k'}$\\banded product};
		\node[blk, right=2.4mm of prod] (back) {rotate back: $\bF^\hmn$\\if $X{=}F$, then $\bZ$};
		\node[lbl, right=2.4mm of back] (xh) {$\hat\bx$};
		\draw[->] (y)--(rot);  \draw[->] (rot)--(solve);  \draw[->] (solve)--(pcg);
		\draw[->] (pcg)--(prod);  \draw[->] (prod)--(back);  \draw[->] (back)--(xh);
		\node[key, below=12mm of solve] (fac) {form $\bPsi_X$, then factor\\$\bT=\bL\bL^\hmn$, $\bE$, $\bG$};
		\node[blk, left=2.4mm of fac] (dia) {diagonals $\bg_{k'}$ or $\bff_{l'}$:\\one FFT per tap};
		\node[blk, left=2.4mm of dia] (sel) {marginals:\\choose $X$, band, $D$};
		\node[lbl, left=2.4mm of sel] (h) {$\hat h[k',l']$\\(DD estimate)};
		\draw[->] (h)--(sel);  \draw[->] (sel)--(dia);  \draw[->] (dia)--(fac);
		\draw[->] (fac.north)--node[right, font=\scriptsize, inner sep=2pt]{$\bL,\bE,\bG$}(solve.south);
		\node[font=\scriptsize\itshape, above=1.2mm of prod.north] {Signal path: once per frame};
		\node[font=\scriptsize\itshape, below=1.2mm of dia.south] {Channel path: once per channel estimate};
	\end{tikzpicture}
	\caption{Architecture of the proposed receiver. The rotated observation is $\by_X=\bZ^\hmn\by$ in the TD branch and $\by_X=\bF\bZ^\hmn\by$ in the FD branch, and $\bPsi_X$ is the LMMSE system matrix in that domain. Dashed boxes are used only when the filter tails are too heavy for the retained band. The channel path selects the domain and factorizes the system matrix, and is repeated only when the channel estimate changes; the signal path then costs $\mathcal O(MN(\log N+D))$ per frame. }\label{fig:rx}
\end{figure*}
\subsection{Exact Rotation of the LMMSE Problem}\label{sec:lmmse}
The first step shows that the DD-LMMSE estimator \eqref{eq:lmmse} can be computed in TD or FD \emph{without approximation}. This requires the noise covariance in the new domain.

The noise part of \eqref{eq:yt} is $\nu(t)=T^{-1/2}W_2^*(t)\int w_1^*(u-t)n(u)\,du$, whose correlation is
\[
\mathbb E[\nu(t)\nu^*(t')]=\frac{N_0}{T}W_2^*(t)W_2(t')\int w_1^*(u-t)\,w_1(u-t')\,du.
\]
Substituting $u=\frac{t+t'}{2}+v$ in \eqref{eq:chi} with $\nu=0$ shows that the integral equals $\chi_0(t-t')$. Periodizing as in \eqref{eq:yT} and rescaling by $T$ gives $\bn=\bZ\bn_T$ with $\mathbb E[\bn_T\bn_T^\hmn]=N_0\bR_T$ and
\begin{equation}\label{eq:RT}
	\bR_T[t,t']=\sum_{p,p'\in\Z}\hat W^*(t_p)\,\hat W(t'_{p'})\,\chi_0(t_p-t'_{p'}),
\end{equation}
so that $\bR=\bZ\bR_T\bZ^\hmn$. Since $\chi_0$ decays with the pulse autocorrelation, $\bR_T$ is cyclically banded, with a width set by the support of $\chi_0$; for the sinc pulse with the rectangular window, $\chi_0(d)=\mathrm{sinc}(Bd)$ vanishes at nonzero multiples of $1/B$ and $\hat W=1$ on the frame, so $\bR_T=\bI$.

Substituting $\bH=\bZ\bH_T\bZ^\hmn$ and $\bR=\bZ\bR_T\bZ^\hmn$ into \eqref{eq:lmmse} and using $\bZ^\hmn\bZ=\bI$ gives
\begin{align}
	\hat\bx&=\bZ\bH_T^\hmn\bZ^\hmn\left(\bZ(\bH_T\bH_T^\hmn+N_0\bR_T)\bZ^\hmn\right)^{-1}\by\nonumber\\
	&=\bZ\,\bH_T^\hmn\left(\bH_T\bH_T^\hmn+N_0\bR_T\right)^{-1}\bZ^\hmn\by.\label{eq:lmmseT}
\end{align}
Applying the same argument with $\bH_F=\bF\bH_T\bF^\hmn$ and $\bR_F=\bF\bR_T\bF^\hmn$,
\begin{equation}\label{eq:lmmseF}
	\hat\bx=\bZ\bF^\hmn\,\bH_F^\hmn\left(\bH_F\bH_F^\hmn+N_0\bR_F\right)^{-1}\bF\bZ^\hmn\by.
\end{equation}
Both \eqref{eq:lmmseT} and \eqref{eq:lmmseF} are \emph{identical} to \eqref{eq:lmmse}. Writing $X\in\{T,F\}$, $\by_X$ for the rotated observation ($\by_T=\bZ^\hmn\by$, $\by_F=\bF\bZ^\hmn\by$) and
\begin{equation}\label{eq:psiX}
	\bPsi_X\triangleq\bH_X\bH_X^\hmn+N_0\bR_X,
\end{equation}
the whole receiver reduces to three operations: solve $\bPsi_X\bz=\by_X$, form $\bH_X^\hmn\bz$, and transform back. All savings come from the structure of $\bPsi_X$.

\subsection{Domain and Band Selection}\label{sec:trunc}
Corollary~\ref{cor:band} states that the TD band follows the delay extent of $h$ and the FD band its Doppler extent. Which is smaller depends on the channel, so the receiver decides per realization.

Let $e_{k'}\triangleq\|\bg_{k'}\|^2$ (TD) or $e_{l'}\triangleq\|\bff_{l'}\|^2$ (FD). For each $D$, the out-of-band fraction is
\begin{equation}\label{eq:eps}
	\varepsilon_X(D)=1-\max_{u}\frac{\sum_{j=0}^{D-1}e_{u+j}}{\sum_{j}e_j},
\end{equation}
where the maximum runs over cyclic window starts $u$. Its maximizer defines the retained set $\mathcal K$ (TD) or $\mathcal L$ (FD) of $D$ consecutive diagonals. By Lemma~\ref{lem:energy}, $e_{k'}$ and $e_{l'}$ are simply the delay and Doppler marginals of $|h[k',l']|^2$, scaled by $n$. Evaluating \eqref{eq:eps} for all $D$ therefore takes $\mathcal O(n)$ operations with prefix sums, and no matrix is formed. The receiver sets
\begin{equation}\label{eq:rule}
	\!\!\!D_X=\min\{D:\varepsilon_X(D)\le\varepsilon_0\},\; X^\star=\arg\min_{X\in\{T,F\}}D_X,
\end{equation}
because the cost of the later stages grows as $D_X^2$ (Section~\ref{sec:cplx}).

\emph{Physical interpretation.} From \eqref{eq:HTphys}, the delay extent of $h$ is the delay spread $B\tau_{\max}$ (in bins) plus the tails of $\chi$ along delay. The Doppler extent is the Doppler spread $2T\nu_{\max}$ (in bins) plus the tails of the window spectrum. Hence
\begin{equation}\label{eq:Dheur}
	D_T\approx B\tau_{\max}+1+2\kappa_\tau,\qquad D_F\approx 2T\nu_{\max}+1+2\kappa_\nu,
\end{equation}
where $\kappa_\tau$ and $\kappa_\nu$ are the one-sided tail lengths of the filter responses. TD is preferable when the Doppler spread in bins exceeds the delay spread in bins, i.e., in high-mobility links with short delay spread. FD is preferable in the opposite case. Section~\ref{sec:results} confirms \eqref{eq:Dheur} for both regimes.

\subsection{Channel Diagonals and the System Matrix}
Given the domain $X$ and the retained set ($\mathcal K$ in TD, $\mathcal L$ in FD), the receiver computes only the retained diagonals of $\bH_X$ from \eqref{eq:diags}. Each $\bg_{k'}$ is the $n$-point IFFT of the Doppler profile $h[k',\cdot]$, modulated by $\omega^{-k'l'}$ inside the sum. Each $\bff_{l'}$ is the $n$-point FFT of the delay profile $h[\cdot,l']$. This costs $D\,\tfrac n2\log_2n$ CMs, and neither the path parameters nor a dense matrix are needed.
The diagonals of $\bPsi_X$ follow from those of $\bH_X$. Write $\psi_c[t]\triangleq\bPsi_X[t,t-c]$. Since $(\bH_X\bH_X^\hmn)[t,t-c]=\sum_m\bH_X[t,m]\bH_X^*[t-c,m]$ and $\bH_X[t,m]\neq0$ only for $m=t-k'$ with $k'\in\mathcal K$, we need $\bH_X[t-c,t-k']=g_{k''}[t-c]$ with $(t-c)-k''=t-k'$, i.e., $k''=k'-c$. Therefore
\begin{equation}\label{eq:psidiag}
	\psi_c[t]=\!\!\sum_{\substack{k'\in\mathcal K\\ k'-c\in\mathcal K}}\!\!g_{k'}[t]\,g^*_{k'-c}[t-c]+N_0\,r_c[t],\quad|c|\le w,
\end{equation}
where $r_c$ are the diagonals of $\bR_X$. The cyclic half-bandwidth is $w=\max(D-1,w_R)$, where $w_R$ is the half-bandwidth of $\bR_X$. Only $c\ge0$ needs to be computed, since $\psi_{-c}[t]=\psi_c^*[t+c]$. The cost is $nD(D+1)/2$ CMs. Finally, $\bPsi_X$ is Hermitian positive definite (HPD) whenever $\bR_X\succ0$, as for the filters of Section~\ref{sec:model}: $\bH_X\bH_X^\hmn\succeq0$ and $N_0\bR_X\succ0$.

\subsection{Bordered Banded Cholesky Factorization}\label{sec:chol}
If $\bPsi_X$ were strictly banded, a banded Cholesky factorization would solve the system in $\mathcal O(nw^2)$ \cite{golub}. However, the quasi-periodicity of Zak-OTFS makes the band \emph{cyclic}: diagonal $c$ wraps around, so entries appear in the top-right and bottom-left corners (Fig.~\ref{fig:part}). Viewed as an ordinary banded matrix, a cyclically banded matrix has half-bandwidth close to $n$, and a banded factorization gains nothing. Prior work avoids the corners by approximating them \cite{mohammedFD} or by removing them through null-space mounting at a cost in degrees of freedom \cite{mattuFD}. We keep them and treat them exactly. The idea is that all corner entries involve the last $w$ indices. Moving these indices into a small \emph{border} leaves a leading block that is strictly banded (Fig.~\ref{fig:part}). The border couples to only $w$ unknowns, and its effect is absorbed by a $w\times w$ Schur complement.
\begin{figure}[t]\centering
	\begin{tikzpicture}[scale=0.62, font=\footnotesize]
		\def\n{6} \def\w{0.9} \def\q{5.1}
		\fill[blue!25] (0,\n) -- (\w,\n) -- (\n,\w) -- (\n,0) -- (\n-\w,0) -- (0,\n-\w) -- cycle;
		\fill[red!35] (\n-\w,\n) -- (\n,\n) -- (\n,\n-\w) -- cycle;
		\fill[red!35] (0,0) -- (\w,0) -- (0,\w) -- cycle;
		\draw (0,0) rectangle (\n,\n);
		\draw[dashed, thick] (\q,0) -- (\q,\n); \draw[dashed, thick] (0,\n-\q) -- (\n,\n-\q);
		\node at (\q/2+0.1,\n-\q/2+0.9) {$\bT$};
		\node at (\q+\w/2,\n-\q/2+0.9) {$\bB$};
		\node at (\q/2-0.6,0.35) {$\bB^\hmn$};
		\node at (\q+\w/2,0.45) {$\bC$};
		\draw[<->] (0,\n+0.3) -- node[above]{$Q=n-w$} (\q,\n+0.3);
		\draw[<->] (\q,\n+0.3) -- node[above]{$w$} (\n,\n+0.3);
		\node[align=left, anchor=west] at (\n+0.3,\n-1.2) {\scriptsize banded part};
		\fill[blue!25] (\n+0.1,\n-1.35) rectangle ++(0.2,0.3);
		\node[align=left, anchor=west] at (\n+0.3,\n-2.0) {\scriptsize wrap-around};
		\fill[red!35] (\n+0.1,\n-2.15) rectangle ++(0.2,0.3);
	\end{tikzpicture}
	\caption{Cyclically banded $\bPsi_X$ and the partition \eqref{eq:part}. The wrap-around entries (red) lie only in the border $\bB$, $\bB^\hmn$, $\bC$; the leading block $\bT$ is strictly banded.}\label{fig:part}
\end{figure}
This is illustrated in Fig.~\ref{fig:part}. The blue band holds the entries within $w$ of the main diagonal, and the red triangles are the wrap-around entries, which sit in the top-right and bottom-left corners. Cutting away the last $w$ rows and columns, along the dashed lines, leaves a leading block that contains no red entry, while every red entry falls in the thin border. 
Formally, let $Q=n-w$ and partition
\begin{equation}\label{eq:part}
	\bPsi_X=\begin{bmatrix}\bT&\bB\\ \bB^\hmn&\bC\end{bmatrix},\qquad \bT\in\mathbb C^{Q\times Q},\;\bC\in\mathbb C^{w\times w}.
\end{equation}
Two index observations confirm what Fig.~\ref{fig:part} suggests. Entry $(t,t')$ of $\bT$, with $0\le t,t'<Q$, is nonzero only if $(t-t')\bmod n\in[-w,w]$; since $|t-t'|\le Q-1=n-w-1$, the only possibility is $t-t'\in[-w,w]$ itself, so \emph{no wrap-around entry lies in} $\bT$, which is therefore strictly banded with half-bandwidth $w$. In $\bB$ (rows $t<Q$, columns $t'\ge Q$), a direct entry requires $t\ge Q-w$ and a wrap-around entry requires $t-t'+n\le w$, that is $t<w$, so the nonzero entries of $\bB$ lie in its first and last $w$ rows.

We now factor $\bPsi_X=\bL_\Psi\bL_\Psi^\hmn$ using the bordered lower-triangular factor
\begin{equation}\label{eq:fac}
	\bL_\Psi=\begin{bmatrix}\bL&\mathbf 0\\ \bE^\hmn&\bG\end{bmatrix}.
\end{equation}
Multiplying out $\bL_\Psi\bL_\Psi^\hmn$ and equating the blocks with \eqref{eq:part} gives
\begin{equation}\label{eq:blocks}
	\bL\bL^\hmn=\bT,\qquad\bL\bE=\bB,\qquad\bG\bG^\hmn=\bC-\bE^\hmn\bE.
\end{equation}
These equations are evaluated sequentially. The first is a banded Cholesky factorization, while the second requires $w$ banded forward substitutions, one for each column of $\bB$. The final equation is a dense Cholesky factorization of the $w\times w$ Schur complement.

Since $\bT$ is a principal submatrix of the Hermitian positive-definite matrix, it is also positive definite and therefore admits a Cholesky factorization. Moreover, because $\bT$ is strictly banded, its Cholesky factor $\bL$ has the same bandwidth. The Schur complement is 
\begin{align}
	\bC-\bE^\hmn\bE=\bC-\bB^\hmn\bT^{-1}\bB.
\end{align}
is likewise positive definite, so $\bG$ exists. Thus, no pivoting is required \cite{golub}.


Finally, solving
\begin{equation}
	\boldsymbol{\Psi}_{X}\mathbf{z}
	=\mathbf{y}_{X},
	\qquad
	\mathbf{y}_{X}
	=
	\begin{bmatrix}
		\mathbf{r}_{1}\\
		\mathbf{r}_{2}
	\end{bmatrix},
\end{equation}
using
$\boldsymbol{\Psi}_{X}
=\mathbf{L}_{\Psi}\mathbf{L}_{\Psi}^{H}$
requires a forward and a backward substitution. The forward substitution
first solves $\mathbf{L}\mathbf{u}_{1}=\mathbf{r}_{1}$ and then
$\mathbf{G}\mathbf{u}_{2}
=\mathbf{r}_{2}-\mathbf{E}^{H}\mathbf{u}_{1}$.
The backward substitution proceeds in reverse order, solving
$\mathbf{G}^{H}\mathbf{z}_{2}=\mathbf{u}_{2}$ followed by
$\mathbf{L}^{H}\mathbf{z}_{1}
=\mathbf{u}_{1}-\mathbf{E}\mathbf{z}_{2}$.
Thus, the bordered Cholesky factorization yields the exact solution of
the cyclically banded LMMSE system while retaining the wrap-around
terms through the border matrices $\mathbf{E}$ and $\mathbf{G}$.

With $\mathbf{z}$ obtained, the receiver forms the equalized vector
$\mathbf{H}_{X}^{H}\mathbf{z}$ in the selected domain. If $X=F$, the
result is first transformed by $\mathbf{F}^{H}$ and then mapped to the
DD domain through the inverse DZT. The resulting estimate is finally
sliced to the signal constellation. The complete receiver procedure is
summarized in Algorithm~\ref{alg:bbc}.

\begin{algorithm}[t]
	\caption{Proposed LMMSE receiver for Zak-OTFS}\label{alg:bbc}
	\begin{algorithmic}[1]
		\Require DD samples $\by$; effective channel $h[k',l']$ on its support; $N_0$; tolerance $\varepsilon_0$; PCG steps $K\ge0$
		\Statex \textit{Channel path}
		\State Compute the marginals \eqref{eq:energy} and $\varepsilon_T(D)$, $\varepsilon_F(D)$ \eqref{eq:eps}; choose $X$, the retained set, and $D$ by \eqref{eq:rule}
		\State Compute the retained diagonals $\{\bg_{k'}\}_{k'\in\mathcal K}$ or $\{\bff_{l'}\}_{l'\in\mathcal L}$ via \eqref{eq:diags} with FFTs
		\State Form $\psi_c$ by \eqref{eq:psidiag}; if $\bar\varepsilon$ is not negligible, add $\bar\varepsilon$ to $\psi_0$ \eqref{eq:psiD}
		\State Factor $\bT=\bL\bL^\hmn$; compute $\bE=\bL^{-1}\bB$ and $\bG\bG^\hmn=\bC-\bE^\hmn\bE$ \eqref{eq:blocks}
		\Statex \textit{Signal path}
		\State $\by_X\gets\bZ^\hmn\by$ ($M$ IFFTs of size $N$); if $X=F$, $\by_X\gets\bF\by_X$
		\State Solve $\bPsi_X\bz=\by_X$ using the bordered Cholesky factors
		\If{$K>0$}
		\State Refine $\bz$ using $K$ PCG iterations with the bordered Cholesky factorization as preconditioner
		\EndIf
		\State Transform back to the DD domain: $\hat{\bx}\gets\bZ\bH_X^\hmn\bz$; if $X=F$, apply $\bF^\hmn$ before $\bZ$; slice $\hat{\bx}$ to the constellation
	\end{algorithmic}
\end{algorithm}
\subsection{Refinement for Heavy-Tailed Filters}\label{sec:pcg}
With rapidly decaying filters (e.g., Gaussian), $\varepsilon_X(D)$ falls super-exponentially once $D$ covers the physical spread, and the receiver above is exact for practical purposes. With slowly decaying filters (e.g., sinc, whose $\chi$ decays as $1/|d|$), any affordable band leaves a residual. We treat this residual in two steps of increasing cost.

\emph{Mismatch-aware regularization (no extra cost).} Write $\bH_X=\bH_D+\bE_D$, where $\bH_D$ keeps the retained diagonals. The two terms have disjoint supports, so the cross term $\bH_D\bE_D^\hmn$ has zero diagonal and
\[
\mathrm{diag}(\bH_X\bH_X^\hmn)=\mathrm{diag}(\bH_D\bH_D^\hmn)+\mathrm{diag}(\bE_D\bE_D^\hmn).
\]
Replacing the neglected terms by their average diagonal $\bar\varepsilon\triangleq\|\bE_D\|_F^2/n$ yields
\begin{equation}\label{eq:psiD}
	\bPsi_D=\bH_D\bH_D^\hmn+N_0\bR_X+\bar\varepsilon\,\bI,
\end{equation}
which preserves $\mathrm{tr}(\bPsi_X)$ and the band. The residual $\bE_D\bs$ acts as interference of average power $\bar\varepsilon$, so to first order the post-equalization signal-to-interference-plus-noise ratio (SINR) is limited to about $1/\bar\varepsilon$. By Lemma~\ref{lem:energy}, $\bar\varepsilon=\varepsilon_X(D)\,\|\bH_X\|_F^2/n$ is known from the marginals. This step softens the error floor but cannot remove it.

\emph{Banded-preconditioned CG (a few iterations).} To remove the floor, we solve the \emph{exact} system $\bPsi_X\bz=\by_X$ by preconditioned CG \cite{saad}. The factorization of $\bPsi_D$ from Section~\ref{sec:chol} serves as the preconditioner, applied at $\mathcal O(nw)$ per step. PCG converges as
\[
\|\bz_K-\bz^\star\|_{\bPsi}\le2\left(\frac{\sqrt\kappa-1}{\sqrt\kappa+1}\right)^K\|\bz_0-\bz^\star\|_{\bPsi},
\]
with $\kappa=\kappa(\bPsi_D^{-1}\bPsi_X)$. Since $\bPsi_D^{-1}\bPsi_X=\bI+\bPsi_D^{-1}(\bPsi_X-\bPsi_D)$ and the perturbation is small in norm, the eigenvalues cluster near one and $\kappa$ is small. The narrow band captures most of the interference, and CG corrects the remainder.

PCG needs products with the exact $\bH_X$. They can be computed from the full estimated support of $h$ as $\bH_X\bv=\sum_{k'}\bg_{k'}\odot\bPi^{k'}\bv$. Alternatively, when the path parameters $(h_i,\tau_i,\nu_i)$ are available from a parametric channel estimate, they can be computed with FFTs at a cost independent of the channel spread. For the sinc pulse, \eqref{eq:HTphys} with $p=q=0$ gives
\begin{align}
	\bH_T[t,t']&=\sum_i h_i\,e^{-j\pi\nu_i\tau_i}\,e^{j\pi\nu_i\frac tB}\,\kappa_i[t-t']\,e^{j\pi\nu_i\frac{t'}{B}},\nonumber\\
	\bH_T&=\sum_i h_i\,e^{-j\pi\nu_i\tau_i}\,\bPhi_i\bK_i\bPhi_i,\label{eq:toep}
\end{align}
where $\kappa_i[d]=\chi_{\nu_i}(d/B-\tau_i)$, $\bPhi_i=\mathrm{diag}(e^{j\pi\nu_it/B})$ and $\bK_i$ is Toeplitz. Each Toeplitz product is a linear convolution computed with FFTs of length $L\ge2n-1$. One iteration, with one product by $\bH_T$ and one by $\bH_T^\hmn$, costs $2P(L\log_2L+L+2n)$ CMs.

\subsection{Computational Complexity}\label{sec:cplx}
We show in Table~\ref{tab:cplx} the number of CMs required by each operation of the receiver. Band selection needs only $\mathcal O(n)$ operations, since by Lemma~\ref{lem:energy} it works on the two marginals of the channel estimate. Computing the retained diagonals of $\bH_X$ from \eqref{eq:diags} costs $D\frac n2\log_2n$, one length-$n$ FFT per retained tap, and assembling the diagonals of $\bPsi_X$ through \eqref{eq:psidiag} costs $nD(D+1)/2$, the dominant term for wide bands. The factorization \eqref{eq:blocks} adds about $nw(w+3)/2$ for the banded Cholesky factor of $\bT$, $\tfrac32nw^2$ for $\bE$ and the Schur complement, and $w^3/6$ for $\bG$, while the signal path costs $n\log_2N$ for the two transforms ($n\log_2(n)$ more on the FD branch), $4nw$ for the substitutions and $nD$ for $\bH_X^\hmn\bz$.
The overall complexity is $\mathcal O\big(MN(\log N+D^2)\big)$, linear in the frame size and fixed in advance for a given band width, so the receiver meets a latency budget that an iterative solver in \cite{mattuFD} cannot guarantee. The bordered steps that make the wrap-around exact contribute only $\mathcal O(nw^2)$, the same order as the banded factorization itself, so treating the quasi-periodicity exactly is essentially free. In practice the channel-dependent work is repeated only when the estimate changes, so a receiver that equalizes several frames per estimate pays only $\mathcal O(n(\log N+D))$ per frame.
Compared with \cite{tiwari}, the partition \eqref{eq:part} uses Cholesky rather than LU, which is possible because $\bPsi_X$ is Hermitian positive definite. This halves both the arithmetic and the storage, and it removes the need for pivoting. It also carries \emph{both} wrap-around corners in the border, rather than one, because the Zak-OTFS pulses are non-causal and place taps on either side of the diagonal. Compared with \cite{mohammedFD,mattuFD}, the wrap-around is carried exactly by the border, so all $MN$ DD positions remain available to data and the solution is direct.

Table~\ref{tab:prior} summarizes the key differences between the proposed
receiver and the FD equalizers in \cite{mohammedFD,mattuFD} and the TD
equalizer in \cite{zhang2025ogd}. Building on the use of a unitarily
equivalent banded domain for low-complexity equalization, the proposed
receiver selects the processing domain adaptively, preserves the
quasi-periodic wrap-around exactly, and replaces iterative equalization
with a direct bordered-Cholesky solution.
\begin{table}[t]
	\centering
	\caption{Complex Multiplications of the Proposed Receiver ($n=MN$, $w=D-1$)}\label{tab:cplx}
	\setlength{\tabcolsep}{3pt}\small
	\begin{tabular}{@{}ll@{}}
		\toprule
		Operation & CMs\\\midrule
		Diagonals from $h$ \eqref{eq:diags} & $D\,\tfrac n2\log_2 n$\\
		DZT and inverse DZT & $n\log_2N$\\
		FFT and IFFT (FD only) & $n\log_2(n)$\\
		Diagonals of $\bPsi_X$ \eqref{eq:psidiag} & $n\,D(D+1)/2$\\
		Banded Cholesky of $\bT$ & $n\,w(w+3)/2$\\
		Border: $\bE$, Schur, $\bG$ & $\tfrac32n w^2+\tfrac{w^3}{6}$\\
		Substitutions and $\bH_X^\hmn\bz$ & $n(4w+D)$\\\midrule
		Total (order) & $\mathcal O\big(n(\log N+D^2)\big)$\\
		Direct DD-LMMSE \eqref{eq:lmmse} & $\approx\tfrac23(n)^3$\\
		PCG step (sinc, FFT) & $2P(L\log_2\!L{+}L{+}2n){+}4nw$\\\bottomrule
	\end{tabular}
\end{table}

\begin{table*}[t]
    \centering
    \caption{Comparison with prior Zak-OTFS equalizers.}
    \label{tab:prior}
    \setlength{\tabcolsep}{6pt}
    \begin{tabular}{@{}p{0.14\textwidth}
                    p{0.25\textwidth}
                    p{0.23\textwidth}
                    p{0.30\textwidth}@{}}
        \toprule
        & \cite{mohammedFD}/\cite{mattuFD}
        & \cite{zhang2025ogd}
        & This work \\ \midrule

        Domain
        & FD
        & TD
        & TD or FD, selected \\

        Structure
        & Doppler-banded
        & Channel sparsity
        & TD: delay-banded; \\& & &FD: Doppler-banded \\

        Wrap-around
        & Extended vector / null-space mounting
        & ---
        & Exact bordered factorization \\

        Solver
        & Banded inverse / CG
        & Offset gradient descent
        & Direct Cholesky (+PCG) \\

        Cost
        & $\mathcal{O}(bn^2)$ / $\mathcal{O}(Kbn)$
        & Iterative, $K$ steps
        & $\mathcal{O}\!\left(n(\log N+D^2)\right)$, fixed \\

        Noise
        & White
        & ---
        & Colored $\bR_X$ \\

        \bottomrule
    \end{tabular}
\end{table*}
\section{Numerical Results}\label{sec:results}
This section evaluates the proposed receiver against direct DD-LMMSE in two regimes: a rapidly decaying filter, for which a narrow band suffices, and a slowly decaying one, for which it does not. 
Throughout, a frame carries $M=31$ delay bins and $N=37$ Doppler bins ($n=1147$ symbols), with a Doppler period $\nu_p=30$~kHz, a bandwidth $B=930$~kHz and a frame duration $T=1.233$~ms, as in \cite{mohammedFD,mattuFD}, so that the complexity figures below are directly comparable with the frequency-domain receivers reported there.
We use the International Telecommunication Union (ITU) vehicular-A (Veh-A, $\tau_{\max}=2.51~\mu$s) and vehicular-B (Veh-B, $\tau_{\max}=20~\mu$s $<\tau_p$) profiles \cite{itu}, with Jakes Doppler $\nu_i=\nu_{\max}\cos\theta_i$, $\theta_i\sim\mathcal U[0,2\pi)$, and independent Rayleigh gains. Two filters are compared: the Gaussian filter of \cite{mohammedFD} ($\alpha_g=\beta_g=1.584$), for which $\bR_T\neq\bI$, and the sinc pulse with a rectangular window, for which $\bR_T=\bI$. Transmission is uncoded 4-ary quadrature amplitude modulation (4-QAM) with perfect CSI. The SNR is $\rho=(E/N_0)/\big((1+\tau_{\max}/T)(1+2\nu_{\max}/B)\big)$ for the Gaussian filter \cite{mohammedFD} and $E/N_0$ for sinc. These two filters bracket the behaviour of interest: the Gaussian decays fastest and the sinc slowest, while the Gaussian-sinc filter of \cite{gaussiansinc} lies between them and would inherit the intermediate band width. The TD channel and noise are generated exactly from \eqref{eq:HTphys} and \eqref{eq:RT} with $|p|,|q|\le5$. The reference DD-LMMSE applies \eqref{eq:lmmse} to the dense matrix $\bZ\bH_T\bZ^\hmn$.


\subsection{Band Widths and Domain Selection}
Fig.~\ref{fig:structure} showed the band structure of a single small frame; Fig.~\ref{fig:band} and Table~\ref{tab:D} now quantify it over many realizations of the full-size frame and averaged over 20 channels.
\begin{figure}[t]\centering
	\includegraphics[width=\columnwidth]{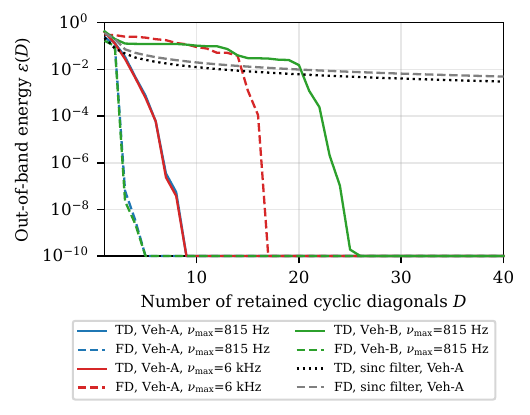}
	\caption{Out-of-band energy \eqref{eq:eps} versus number of retained diagonals, TD (solid) and FD (dashed), averaged over 20 channels; Gaussian filter unless stated.}\label{fig:band}
\end{figure}
\begin{table}[t]\centering
	\caption{Diagonals for $\varepsilon_0=10^{-4}$ and CMs at $M=31,N=37$ (Direct DD-LMMSE: $1.0\times10^9$ CMs)}\label{tab:D}
	\setlength{\tabcolsep}{3.5pt}
	\begin{tabular}{@{}lcccc@{}}\toprule
		Channel & $D_T$ & $D_F$ & Domain & CMs (reduction)\\\midrule
		Veh-A, 815 Hz & 6 & 3 & FD & $5.0\times10^4$ ($2.0\times10^4\times$)\\
		Veh-A, 6 kHz & 6 & 17 & TD & $1.3\times10^5$ ($8.0\times10^3\times$)\\
		Veh-B, 815 Hz & 23 & 3 & FD & $5.0\times10^4$ ($2.0\times10^4\times$)\\\bottomrule
	\end{tabular}
\end{table}
We observe that the TD curves for Veh-A at 815~Hz and at 6~kHz coincide: the TD band does not depend on Doppler. Likewise, the FD curves of Veh-A and Veh-B at 815~Hz coincide. The measured widths agree with \eqref{eq:Dheur}. For Veh-A at 6~kHz ($2T\nu_{\max}=14.8$, $B\tau_{\max}=2.3$), TD needs 6 diagonals and FD 17, so TD reduces the $D^2$-dependent cost about $7\times$. For Veh-B ($B\tau_{\max}=18.6$), FD needs 3 and TD 23. With the sinc filter, $\varepsilon_T(D)$ decays slowly: $D=12$ reaches $10^{-2}$, but even $D=60$ does not reach $10^{-3}$. This matches the observation of \cite{mattuFD} for such filters and motivates Sec.~\ref{sec:pcg}.
\begin{figure}[t]\centering
\includegraphics[width=\columnwidth]{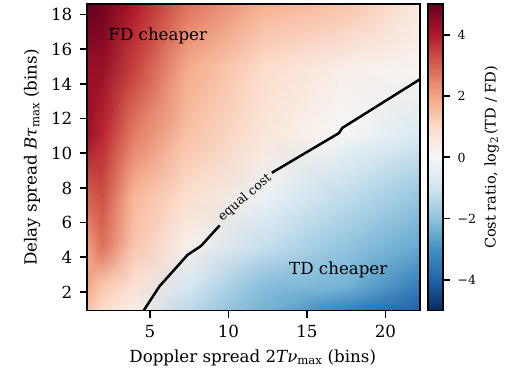}
\caption{Ratio of the cost of TD equalization to that of FD equalization over the spread plane, at $\varepsilon_0=10^{-4}$, averaged over four channels per point. The contour marks equal cost.}\label{fig:map}
\end{figure}

Fig.~\ref{fig:map} generalizes these three operating points to the whole plane of delay and Doppler spreads. For each pair $(B\tau_{\max},2T\nu_{\max})$ we draw channels with that spread, choose $D_T$ and $D_F$ by \eqref{eq:rule}, and plot the ratio of the resulting costs. The contour of equal cost follows $2T\nu_{\max}\approx1.5\,B\tau_{\max}$, as \eqref{eq:Dheur} predicts once the filter tails are included, and it separates two regimes: time is cheaper in the high-mobility corner, frequency in the long-delay corner. The penalty for working in the wrong domain reaches a factor of $42$ at the edges of the plane and exceeds an order of magnitude over much of it, so the choice is worth making per channel rather than once at design time.
\subsection{Bit Error Rate}
Figs.~\ref{fig:bergauss} and \ref{fig:bersinc} show the bit error rate (BER) versus SNR for the two filters.
With the Gaussian filter and Veh-A at $\nu_{\max}=6$~kHz, where TD is the preferred domain, the proposed receiver with $D=6$ coincides with DD-LMMSE at every SNR, both reaching $2.03\times10^{-3}$ at 27~dB, and so does the FD receiver with $D=17$. Narrower windows floor, as anticipated in Section~\ref{sec:trunc}: TD with $D=4$ reaches $6.1\times10^{-3}$ at 27~dB, and FD with $D=11$ and $D=6$ level off at $2.5\times10^{-2}$ and $5.3\times10^{-2}$. The TD receiver therefore attains exact LMMSE performance with $1.3\times10^5$ CMs, against $9.0\times10^5$ in FD and $1.0\times10^9$ for direct DD-LMMSE.

The sinc filter exposes the heavy-tailed case. Truncating to $D=8$, which leaves $\varepsilon_T\approx10^{-2}$, creates a floor near $3.5\times10^{-3}$; the mismatch-aware regularization \eqref{eq:psiD} lowers it to $1.4\times10^{-3}$ at no extra cost, but cannot remove it. Refining with PCG on the exact operator closes the gap, and $K=4$ iterations already follow DD-LMMSE, giving $3.3\times10^{-4}$ against $2.6\times10^{-4}$ at 21~dB and $6.4\times10^{-5}$ against $3.9\times10^{-5}$ at 24~dB. Unpreconditioned CG needs $K=16$ to match and still floors at $1.7\times10^{-3}$ with $K=4$. Counting one product with $\bH_T$ and one with $\bH_T^\hmn$ per iteration, at $6.7\times10^5$ CMs for $P=6$ paths, banded PCG with $K=4$ costs $3.0\times10^6$ CMs in total, against $1.1\times10^7$ for CG with $K=16$ and $1.0\times10^9$ for direct DD-LMMSE.
\begin{figure}[t]\centering
	\includegraphics[width=\columnwidth]{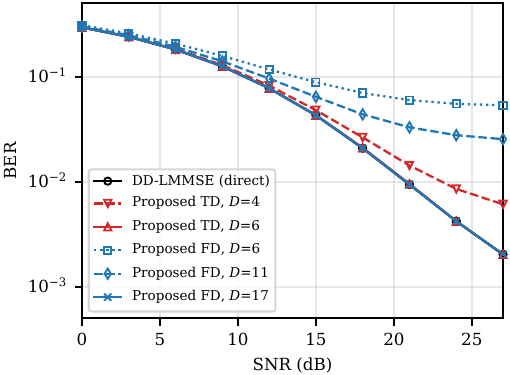}
	\caption{BER, Gaussian filter, Veh-A, $\nu_{\max}=6$~kHz (120 channels, $2.75\times10^5$ bits per point).}\label{fig:bergauss}
\end{figure}
\begin{figure}[t]\centering
	\includegraphics[width=\columnwidth]{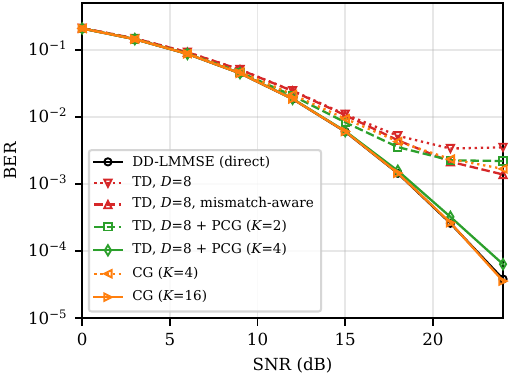}
	\caption{BER, sinc filter, Veh-A, $\nu_{\max}=815$~Hz (170 channels, $3.9\times10^5$ bits per point).}\label{fig:bersinc}
\end{figure}

\subsection{Convergence and Complexity}
\begin{figure}[t]\centering
	\includegraphics[width=\columnwidth]{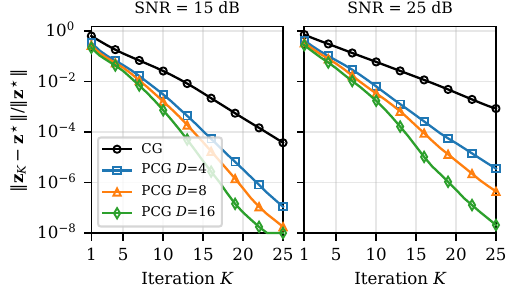}
	\caption{Relative error of CG and banded-preconditioned CG with respect to the exact solution, sinc filter, Veh-A, $\nu_{\max}=815$~Hz (geometric mean over 15 channels).}\label{fig:conv}
\end{figure}
\begin{figure}[t]\centering
	\includegraphics[width=\columnwidth]{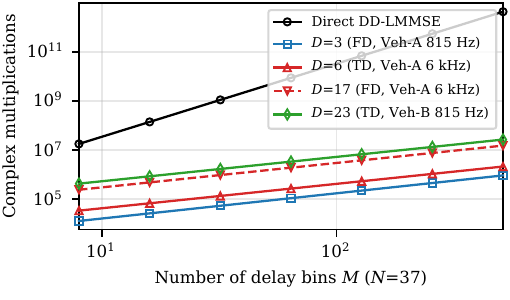}
	\caption{Complex multiplications versus $M$ ($N=37$), from Table~\ref{tab:cplx}.}\label{fig:cplx}
\end{figure}
Fig.~\ref{fig:conv} tracks $\|\bz_K-\bz^\star\|/\|\bz^\star\|$. The banded preconditioner reduces the error after four iterations by a factor of 3--4 relative to CG. It reaches $10^{-3}$ in 10--15 iterations, whereas CG needs 18 at 15~dB and more than 25 at 25~dB. Detection needs much less accuracy: four PCG iterations suffice in Fig.~\ref{fig:bersinc}.

Fig.~\ref{fig:cplx} compares the counts of Table~\ref{tab:cplx} with direct DD-LMMSE. At $M=31$, $N=37$ the reduction ranges from about $600\times$ ($D=23$) to $2\times10^4\times$ ($D=3$), and it grows as $(MN)^2$. The choice of domain matters: for Veh-A at 6~kHz, TD ($D=6$) needs $7\times$ fewer CMs than FD ($D=17$).

\section{Conclusion}\label{sec:concl}
We derived the exact discrete-time model of Zak-OTFS and proved a delay--Doppler duality. The channel is exactly a combination of cyclic delay and Doppler operators in time, and dually in frequency. The TD band is set only by the delay spread and the FD band only by the Doppler spread, and the per-diagonal energies are the marginals of the effective DD channel.
These results yield an exact, direct LMMSE equalizer. It selects the cheaper domain, factorizes the cyclically banded system with a bordered banded Cholesky that handles the quasi-periodic wrap-around without rate loss, and refines with banded-preconditioned CG when the filter tails are heavy. The complexity drops from $\mathcal{O}(n^3)$ to
$\mathcal{O}(n(\log N+D^2))$, with no loss for Gaussian filters and a few PCG steps for sinc filters. 
\ifCLASSOPTIONcaptionsoff
\newpage
\fi
\bibliographystyle{IEEEtran}
\bibliography{IEEEabrv,Low_comp_MMSE}
\end{document}